\newcommand{\techRep}{true} 
\newcommand{\iftechrep}{\ifthenelse{\equal{\techRep}{true}}}
\newcommand{\A}{\mathcal{A}}
\newcommand{\N}{\mathbb{N}}
\newcommand{\Q}{\mathbb{Q}}
\newcommand{\ftrees}[1]{\llparenthesis #1 \rrparenthesis}
\newcommand{\trees}[1]{\llbracket #1 \rrbracket}
\renewcommand{\Pr}{\textup{Pr}}
\newcommand{\mb}[1]{\mathbb{#1}}
\newcommand{\ms}[1]{\mathsf{#1}}
\newenvironment{qlemma}[1]{%
{\par\mbox{}\newline\noindent\bf Lemma #1.}%
\begin{itshape}%
}{%
\end{itshape}%
}
\newenvironment{qproposition}[1]{%
{\mbox{}\newline\noindent\bf Proposition #1.}
\begin{itshape}%
}{%
\end{itshape}%
}
\title{Model Checking Stochastic Branching Processes\thanks{This
    work was partially supported by the ERC Advanced Grant VERIWARE and EPSRC grant EP/F001096/1.
    Stefan Kiefer is supported by a DAAD postdoctoral fellowship.}}
\author{Taolue Chen \and Klaus Dr{\"a}ger \and Stefan Kiefer}
\institute{University of Oxford, UK
\email{\{taolue.chen,klaus.draeger,stefan.kiefer\}@cs.ox.ac.uk}
}
\begin{document}

\maketitle

\begin{abstract}
%
%
%
Stochastic branching processes are a classical model
for describing random trees, which have applications in numerous fields including biology, physics, and natural language processing. In particular,
they have recently been proposed to describe parallel programs with stochastic process creation.
In this paper, we consider the problem of model checking stochastic branching process.  Given a
branching process and a deterministic parity tree automaton, we
are interested in computing the probability that the generated random tree
is accepted by the automaton.  We show that this probability can be
compared with any rational number in PSPACE, and with $0$ and~$1$ in
polynomial time.  In a second part, we suggest a tree extension of the
logic PCTL, and develop a PSPACE algorithm for model checking a
branching process against a formula of this logic. We also show that
the qualitative fragment of this logic can be model checked in
polynomial time.
\end{abstract}


\section{Introduction} \label{sec:intro}

Consider an interactive program featuring two types of threads: interruptible threads (type~$I$) and blocking threads (type~$B$)
 which perform a non-interruptible computation or database transaction.
An $I$-thread responds to user commands which occasionally trigger the creation of a $B$-thread.
A $B$-thread may either terminate, or continue, or spawn another $B$-thread in an effort to perform its tasks in parallel.
Under probabilistic assumptions on the thread behaviour, this scenario can be modelled as a \emph{stochastic branching process} as follows:
\begin{align}
 &I \btran{0.9} I     && B \btran{0.2} D && D \btran{1} D \nonumber \\[-2mm]
 &I \btran{0.1} (I,B) && B \btran{0.5} B \label{eq:intro-example} \\[-2mm]
 &                    && B \btran{0.3} (B,B) \nonumber
\end{align}
This means, e.g., that a single step of an $I$-thread spawns a $B$-thread with probability~$0.1$.
We have modelled the termination of a $B$-thread as a transformation into a ``dead'' state~$D$.%
\footnote{We disallow ``terminating'' rules like $B \btran{0.2} \varepsilon$.
This is in contrast to classical branching processes, but technically more convenient for model checking,
where absence of deadlocked states is customarily assumed.}
A ``run'' of this process unravels an infinite tree whose branches record the computation of a thread and its ancestors.
For example, Figure~\ref{fig:intro-example} shows the prefix of a tree that the example process might create.
The probability of creating this tree prefix is the product of the probabilities of the applied rules,
 i.e., $0.1 \cdot 0.9 \cdot 0.1 \cdot 0.3 \cdot 0.5 \cdot 0.2$.

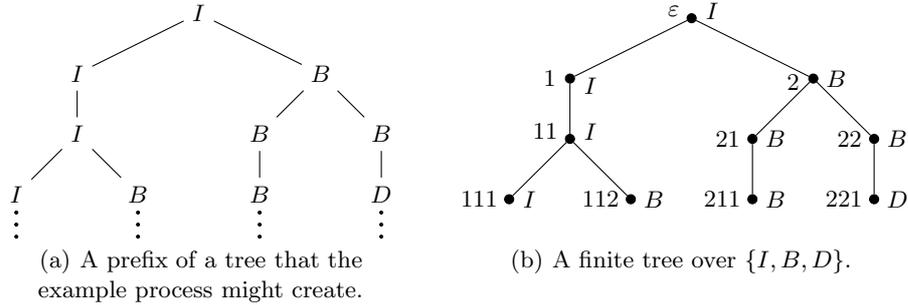
\begin{figure}[t]
\subfigure[A prefix of a tree that the example process might create.\label{fig:intro-example}]{
\begin{tikzpicture}[scale=0.8]
  \node (0) at (0,0) {$I$};
  \node (1) at (-2,-1) {$I$};
  \node (2) at (2,-1) {$B$};
  \node (11) at (-2,-2) {$I$};
  \node (111) at (-3,-3) {$I$};
  \node (112) at (-1,-3) {$B$};
  \node (21) at (1,-2) {$B$};
  \node (22) at (3,-2) {$B$};
  \node (211) at (1,-3) {$B$};
  \node (221) at (3,-3) {$D$};
  \draw (0)--(1);
  \draw (0)--(2);
  \draw (1)--(11);
  \draw (11)--(111);
  \draw (11)--(112);
  \draw (2)--(21);
  \draw (2)--(22);
  \draw (21)--(211);
  \draw (22)--(221);
  \draw[fill] (-3,-3.3) circle (0.7pt);
  \draw[fill] (-3,-3.5) circle (0.7pt);
  \draw[fill] (-3,-3.7) circle (0.7pt);
  \draw[fill] (-1,-3.3) circle (0.7pt);
  \draw[fill] (-1,-3.5) circle (0.7pt);
  \draw[fill] (-1,-3.7) circle (0.7pt);
  \draw[fill] (1,-3.3) circle (0.7pt);
  \draw[fill] (1,-3.5) circle (0.7pt);
  \draw[fill] (1,-3.7) circle (0.7pt);
  \draw[fill] (3,-3.3) circle (0.7pt);
  \draw[fill] (3,-3.5) circle (0.7pt);
  \draw[fill] (3,-3.7) circle (0.7pt);
\end{tikzpicture}
}
\quad
\subfigure[A finite tree over $\{I,B,D\}$.\label{fig:preli-1}]{
\raisebox{2mm}{
\begin{tikzpicture}[scale=0.8,every node/.style={circle,draw,inner sep=1.2pt,fill}]
  \node[label={[yshift=1mm]left:$\varepsilon$},label={[yshift=1mm]right:$I$}] (0) at (0,0){};
  \node[label={[yshift=0mm]left:$1$},label={[yshift=-1mm]right:$I$}] (1) at (-2,-1) {};
  \node[label={[yshift=-0.5mm]left:$2$},label={[yshift=0mm]right:$B$}] (2) at (2,-1) {};
  \node[label={[yshift=1mm]left:$11$},label={[yshift=1mm]right:$I$}] (11) at (-2,-2) {};
  \node[label={[yshift=0mm]left:$111$},label={[yshift=0mm]right:$I$}] (111) at (-3,-3) {};
  \node[label={[yshift=0mm]left:$112$},label={[yshift=0mm]right:$B$}] (112) at (-1,-3) {};
  \node[label={[yshift=0mm]left:$21$},label={[yshift=0mm]right:$B$}] (21) at (1,-2) {};
  \node[label={[yshift=0mm]left:$22$},label={[yshift=0mm]right:$B$}] (22) at (3,-2) {};
  \node[label={[yshift=0mm]left:$211$},label={[yshift=0mm]right:$B$}] (211) at (1,-3) {};
  \node[label={[yshift=0mm]left:$221$},label={[yshift=0mm]right:$D$}] (221) at (3,-3) {};
  \draw (0)--(1);
  \draw (0)--(2);
  \draw (1)--(11);
  \draw (11)--(111);
  \draw (11)--(112);
  \draw (2)--(21);
  \draw (2)--(22);
  \draw (21)--(211);
  \draw (22)--(221);
\end{tikzpicture}
}}\vspace{-2mm}
\caption{Figures for Section \ref{sec:intro} (left) and \ref{sec:preli} (right)}\vspace{-2mm}
\end{figure}

This example is an instance of a (stochastic multitype) branching process,
 which is a classical mathematical model with applications
 in numerous fields including biology, physics and natural language processing, see e.g.~\cite{Harris63,AthreyaNey72}.
In~\cite{KW11:TACAS} an extension of branching processes was introduced to model parallel programs with stochastic process creation.
The broad applicability of branching processes arises from their simplicity:
 each \emph{type} models a class of threads (or tasks, animals, infections, molecules, grammatical structures) with the same probabilistic behaviour.

This paper is about model checking the random trees created by branching processes.
Consider a specification that requires a linear-time property to hold along all tree branches.
In the example above, e.g., we may specify that ``no process should become forever blocking'',
 more formally, ``on all branches of the tree we see infinitely many $I$ or $D$''.
We would like to compute the probability that all branches satisfy such a given $\omega$-regular word property.
Curiously, this problem generalises two seemingly very different classical problems:
\begin{itemize}
\item[(i)]
If all rules in the branching process are of the form $X \btran{p} Y$, i.e., each node has exactly one child,
 the branching process describes a finite-state Markov chain.
Computing the probability that a run of such a Markov chain satisfies an $\omega$-regular property
 is a standard problem in probabilistic verification, see e.g.\ \cite{CourcYann95,Vardi99}.
\item[(ii)]
If for each type~$X$ in the branching process there is only one rule $X \btran{1} \alpha$
 (where $\alpha$ is a nonempty sequence of types),
 then the branching process describes a unique infinite tree.
Viewing the types in~$\alpha$ as possible successor states of~$X$ in a finite nondeterministic transition system,
 the branches in the created tree are exactly the possible runs in the finite transition systems.
Of course, checking if all runs in such a transition system satisfy an $\omega$-regular specification is also a well-understood problem.
\end{itemize}
One could expect that well-known Markov-chain based techniques for dealing with problem~(i) can be generalised to branching processes.
This is \emph{not} the case:
it follows from our results that in the example above, the probability that all branches satisfy the mentioned property is~$0$;%
\footnote{Intuitively, this is because a $B$-thread more often clones itself than dies.}
  however, if the numbers $0.2$ and $0.3$ in~\eqref{eq:intro-example} are swapped, the probability changes from~$0$ to~$1$.
This is in sharp contrast to finite-state Markov chains, where qualitative properties (satisfaction with probability $0$ resp.~$1$)
 do not depend on the exact probability of individual transitions.

The rules of a branching process are reminiscent of the rules of probabilistic pushdown automata (pPDA)
 or the equivalent model of recursive Markov chains (RMCs).
However, the model-checking algorithms for both linear-time and branching-time logics proposed for
 RMCs and pPDAs \cite{EKM04,EYstacs05Extended,EY12journal} do \emph{not} work for branching processes,
  essentially because pPDA and RMCs specify Markov chains, whereas branching processes specify random trees.
Branching processes cannot be transformed to pPDAs, at least not in a straightforward way.
Note that if the rules in the example above are understood as pPDA rules with $I$ as starting symbol,
 then $B$ will never even occur as the topmost symbol.

To model check branching processes, we must leave the realm of Markov chains and
 consider the probability space in terms of tree prefixes~\cite{Harris63,AthreyaNey72}.
Consequently, we develop algorithms that are very different from the ones dealing with the special cases (i) and~(ii) above.
Nevertheless, for qualitative problems (satisfaction with probability $0$ resp.~$1$)
 our algorithms also run in polynomial time with respect to the input models,
  even for branching processes that do not conform to the special cases (i) and~(ii) above.

Instead of requiring a linear-time property to hold on all branches,
 we consider more general specifications in terms of \emph{deterministic parity tree automata}.
In a nutshell, our model-checking algorithms work as follows:
(1) compute the ``product'' of the input branching process and the tree automaton;
(2) reduce the analysis of the resulting product process to the problem of computing the probability that all branches reach a ``good'' symbol;
(3) compute the latter probability by setting up and solving a nonlinear equation system.
Step~(1) can be seen as an instance of the automata-theoretic model-checking approach.
The equation systems of step~(3) are of the form $\vec{x} = \vec{f}(\vec{x})$, where $\vec{x}$ is a vector of variables,
  and $\vec{f}(\vec{x})$ is a vector of polynomials with nonnegative coefficients.
Solutions to such equation systems can be computed or approximated efficiently~\cite{EYstacs05Extended,EKL10:SICOMP,EtessamiSY12:poly-time}.
Step~(2) is, from a technical point of view, the main contribution of the paper;
 it requires a delicate and nontrivial analysis of the behaviour of branching processes.

In Section~\ref{sec:logic} we also consider logic specifications.
We propose a new logic, PTTL, which relates to branching processes in the same manner as the logic PCTL relates to Markov chains.
Recall that PCTL contains formulae such as $[\phi \mathsf{U} \psi]_{\ge p}$ which specifies
 that the probability of runs satisfying $\phi \mathsf{U} \psi$ is at least~$p$.
For PTTL we replace the linear-time subformulae such as $\phi \mathsf{U} \psi$
 with tree subformulae such as $\phi \mathsf{EU} \psi$ or $\phi \mathsf{AU} \psi$,
 so that, e.g., $[\phi \mathsf{EU} \psi]_{\ge p}$  specifies that the probability of trees that have a branch satisfying
  $\phi \mathsf{U} \psi$ is at least~$p$,
  and $[\phi \mathsf{AU} \psi]_{\ge p}$ specifies that the probability of trees all whose branches satisfy $\phi \mathsf{U} \psi$ is at least~$p$.
We show that branching processes can be model checked against this logic in PSPACE, and against its qualitative fragment in polynomial time.

\vspace{2mm}
\noindent\emph{Related work.}
The rich literature on branching processes (see e.g.~\cite{Harris63,AthreyaNey72} and the references therein)
 does not consider model-checking problems.
Probabilistic split-join systems \cite{KW11:TACAS} are branching processes with additional features for process synchronisation and communication.
The paper~\cite{KW11:TACAS} focuses on performance measures (such as runtime, space and work), and does not provide a functional analysis.
The models of pPDAs and RMCs also feature dynamic task creation by means of procedure calls,
 however, as discussed above, the existing model-checking algorithms~\cite{EKM04,EYstacs05Extended,EY12journal} do not work for branching processes.
Several recent works~\cite{EYstacs05Extended,EKL10:SICOMP,EtessamiSY12:poly-time} have studied
 the exact and approximative solution of fixed-point equations of the above mentioned form.
Our work connects these algorithms with the model-checking problem for branching processes.

\vspace{2mm}
\noindent\emph{Organisation of the paper.}
After some preliminaries (Section~\ref{sec:preli}), we present our results on parity specifications in Section~\ref{sec:parity}.
In Section~\ref{sec:logic} we propose the new logic PTTL and develop model-checking algorithms for it.
We conclude in Section~\ref{sec:conclusions}.
\iftechrep{
Some proofs have been moved to an appendix.
}{
Some proofs have been moved to a technical report~\cite{12CDK:mfcs-report}.
}
\section{Preliminaries}  \label{sec:preli}

We let $\mathbb{N}$ and $\mathbb{N}_0$ denote the set of positive and
nonnegative integers, respectively. Given a finite set~$\Gamma$, we
write $ \Gamma^*:= \bigcup_{k \in \N_0}\Gamma^k$ for the set of tuples and
$\Gamma^+ := \bigcup_{k \in \N}\Gamma^k$ for the set of nonempty tuples
over $\Gamma$.

\begin{definition}[Branching process]
A \emph{branching process} is a tuple $\Delta=(\Gamma, \mathord{\btran{}}, \mathit{Prob})$ where $\Gamma$ is a finite set of types,
 $\mathord{\btran{}} \subseteq \Gamma\times \Gamma^+$ is a finite set of transition rules,
 $\mathit{Prob}$ is a function assigning positive probabilities to transition rules
  so that for every $X\in \Gamma$ we have that $\sum_{X \btran{} \alpha}\mathit{Prob}(X \btran{} \alpha)=1$.
\end{definition}
We write $X \btran{p} \alpha$ if $\mathit{Prob}(X \btran{} \alpha) =
p$.  Observe that since the set of transition rules is finite, there
is a global upper bound $K_{\Delta}$ such that $|\alpha| \le
K_{\Delta}$ for all $X \btran{} \alpha$.

A \emph{tree} is a nonempty prefix-closed language $V\subseteq
\mathbb{N}^*$ for which there exists a function $\beta_V:
V\to\N_0$ such that for all $w\in V$ and $k\in\mathbb{N}$,
$wk\in V$ if and only if $k\le\beta_V(w)$. $\beta_V(w)$ is called the
\emph{branching degree} of $w$ in $V$.
We denote by $B_f$ the set of finite
trees, and by $B_i$ the set of infinite trees without leaves
(i.e. trees such that $\beta_V(w) > 0$ for all $w\in V$).
A \emph{prefix} of $V$ is a tree $V'\subseteq V$ such that for
all $w\in V'$,  $\beta_{V'}(w)\in\{0,\beta_V(w)\}$.

A \emph{tree over~$\Gamma$} is a pair $(V,\ell)$ where $V$ is a tree,
 and $\ell: V \to \Gamma$ is a labelling function on the nodes.
Given a tree $t=(V,\ell)$ with a node $u\in V$, we write
$t_u=(V_u,\ell_u)$ for the subtree of~$t$ rooted at~$u$; here
$V_u = \{w\in\mathbb{N}^* \mid uw\in V\}$ and $\ell_u(w) = \ell(uw)$ for
$w\in V_u$.
A tree $(V',\ell')$ is a \emph{prefix} of $(V,\ell)$ if $V'$ is a prefix of $V$ and $\ell'(w) = \ell(w)$ for all $w\in V'$.

A \emph{path} (resp.~\emph{branch}) in a tree~$t=(V,\ell)$ is a finite (resp.~infinite) sequence $u_0, u_1, \ldots$ with $u_i \in V$
 such that $u_0=\epsilon$ is the root of~$t$, and $u_{i+1}=u_ik_i$ for $k_i \in \N$ is a child of~$u_i$.
A \emph{branch label} of~$t$ is a sequence $\ell(u_0), \ell(u_1),\ldots$, where $u_0, u_1, \ldots$ is a branch.
The \emph{successor word} of a node $w\in V$ is $\sigma_t(w) = \ell(w1)\dots\ell(w\beta_V(w))$.

Given a tree~$t = (V,\ell)$ over~$\Gamma$ and a subset $W \subseteq V$, 
 we write $t \models \mathsf{AF} W$ if all its branches
go through~$W$,
i.e., for all $v\in V$ there is a $w\in W$ such that $v$ is a predecessor of~$w$ or vice versa. 
If $\Lambda\subseteq\Gamma$, we write $t\models\mathsf{AF}\Lambda$ for $t\models\mathsf{AF}\{w\in V \mid \ell(w)\in\Lambda\}$.  
Similarly, we write $t \models \ms{AG} \Lambda$ if $\ell(w) \in \Lambda$ for all $w \in V$.

\begin{example} \label{ex:preli-1}
We illustrate these notions.
Figure~\ref{fig:preli-1} shows a finite tree $t = (V,\ell) \in B_f$ over~$\Gamma$ with $\Gamma = \{I,B,D\}$ and $V = \{\varepsilon,1,11,111,112,2,21,211,22,221\}$
 and, e.g.,  $\ell(\varepsilon) = I$ and $\ell(112) = B$.
We have $\beta_V(\varepsilon) = 2$ and $\beta_V(21) = 1$ and $\beta_V(211) = 0$.
The node~$2$ is a predecessor of~$211$.
The tree $t'=(V',\ell')$ with $V' = \{\varepsilon,1,2,21,22\}$ and $\ell'$ being the restriction of~$\ell$ on~$V'$ is a prefix of~$t$.
The sequence $\varepsilon,2,21$ is a path in~$t$.
We have $\sigma_t(11) = IB$.
The tree satisfies $t \models \ms{AF} \{1,21,221\}$ and $t \models \ms{AF} \{I\}$.
\end{example}

A tree $t=(V,\ell)$ over $\Gamma$ is \emph{generated} by a branching
process~$\Delta = (\Gamma, \mathord{\btran{}}, \mathit{Prob})$ if for
every $w\in V$ with $\beta_V(w) > 0$ we have $\ell(w) \btran{} \sigma_t(w)$.  We write
$\ftrees{\Delta}$ and $\trees{\Delta}$ for the sets of trees
$(V,\ell)$ generated by~$\Delta$ with $V\in B_f$ and $V\in B_i$,
respectively. For any $X\in\Gamma$,
$\ftrees{\Delta}_X\subseteq\ftrees{\Delta}$ and
$\trees{\Delta}_X\subseteq\trees{\Delta}$ contain those trees
$(V,\ell)$ for which $\ell(\epsilon) = X$.

\newcommand{\Cyl}{\mathit{Cyl}}
\begin{definition}[Probability space of trees, cf.~\protect{\cite[Chap.~VI]{Harris63}}]
Let $\Delta = (\Gamma, \mathord{\btran{}}, \mathit{Prob})$ be a
branching process. For any finite tree $t=(V,\ell)\in\ftrees{\Delta}$,
let the \emph{cylinder} over $t$ be
$\Cyl_{\Delta}(t):=\{t'\in\trees{\Delta} \mid t\mbox{ is a prefix of
}t'\}$, and define
 $p_{\Delta}(t) := \prod_{w \in V : \beta_V(w)>0}\mathit{Prob}(\ell(w),\sigma_t(w)).$
%
%
For each $X\in\Gamma$ we define a probability space $(\trees{\Delta}_X,\Sigma_X,\Pr_X)$,
 where $\Sigma_X$ is the $\sigma$-algebra generated by $\{\Cyl_{\Delta}(t) \mid t\in\ftrees{\Delta}_X\}$,
  and $\Pr_X$ is the probability measure generated by $\Pr_X(\Cyl_{\Delta}(t)) = p_{\Delta}(t)$.
Sometimes we write $\Pr_X^\Delta$ to indicate~$\Delta$.
We may drop the subscript of $\Pr_X$ if $X$ is understood.
We often write $t_X$ to mean a tree $t \in \trees{\Delta}_X$ randomly sampled according to the probability space above.
\end{definition}

\begin{example}
Let $\Delta = (\Gamma, \mathord{\btran{}}, \mathit{Prob})$ be the branching process with $\Gamma = \{I,B,D\}$ and the rules
 as given in~\eqref{eq:intro-example} on page~\pageref{eq:intro-example}.
The tree~$t$ from Figure~\ref{fig:preli-1} is generated by~$\Delta$: we have $t \in \ftrees{\Delta}_I$.
We have $\Pr_I(\Cyl_{\Delta}(t)) = p_{\Delta}(t) = 0.1 \cdot 0.9 \cdot 0.1 \cdot 0.3 \cdot 0.5 \cdot 0.2$;
this is probability of those trees $t' \in \trees{\Delta}_I$ that have prefix~$t$.
\end{example}
%
%
We say that a quantity $q \in [0,1]$ is \emph{PPS-expressible} if one can compute, in polynomial time,
 an integer $m \in \N$ and a fixed-point equation system $\vec{x} = \vec{f}(\vec{x})$, where $\vec{x}$ is a vector of $m$ variables,
 $\vec{f}$ is a vector of $m$ multivariate polynomials over~$\vec{x}$ with nonnegative rational coefficients,
 $\vec{f}(\vec{1}) \le \vec{1}$ where $\vec{1}$ denotes the vector $(1, \ldots, 1)$,
 and $q$ is the first component of the least nonnegative solution $\vec{y} \in [0,\infty)^m$ of $\vec{x} = \vec{f}(\vec{x})$.
\begin{proposition} \label{prop:PPS-expressible}
 Let $q$ be PPS-expressible. We have:
 \begin{enumerate}
 \item[(a)]
  For $\tau \in \{0,1\}$ one can decide in (strongly) polynomial time whether $q = \tau$.
 \item[(b)]
  For $\tau \in \Q$ one can decide in polynomial space whether
   $q \bowtie \tau$, where
    $\mathord{\bowtie} \in \{ \mathord{<}, \mathord{>}, \mathord{\le}, \mathord{\ge}, \mathord{=}, \mathord{\ne} \}$.
 \item[(c)]
  One can approximate~$q$ within additive error~$2^{-j}$ in time polynomial in both~$j$ and the (binary) representation size of~$\vec{f}$.
 \end{enumerate}
\end{proposition}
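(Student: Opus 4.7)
The plan rests on the observation that the hypothesis $\vec{f}(\vec{1}) \le \vec{1}$ together with nonnegativity of coefficients guarantees, via monotone iteration from $\vec{0}$, that the least nonnegative fixed point $\vec{y}^*$ exists and lies in $[0,1]^m$. With this in place, each of (a), (b), (c) becomes an instance of a well-studied algorithmic problem on monotone polynomial systems, and I would reduce each one to results already available in the literature on recursive Markov chains and probabilistic pushdown automata.

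For part~(a) with $\tau=0$, I would compute the support $S = \{i : y^*_i > 0\}$ by a purely syntactic forward saturation: start with $S_0 = \emptyset$ and, at each step, add $i$ to $S$ whenever $f_i$ has a monomial all of whose variables (if any) already lie in $S$. This stabilises within $m$ rounds in strongly polynomial time and yields $q=0$ iff index~$1$ is absent from~$S$. For $\tau=1$, after restricting to the support, I would invoke the strongly polynomial time algorithm for almost-sure questions on monotone PPSs (as established via the line of work culminating in~\cite{EtessamiSY12:poly-time}), which characterises $\{i : y^*_i = 1\}$ through a combinatorial analysis of the dependency graph combined with a test on the Jacobian of~$\vec{f}$ at the LFP.

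For part~(b), I would phrase the query as a sentence in the first-order theory of the reals. Since $\vec{y}^*$ is the componentwise least nonnegative solution of $\vec{x} = \vec{f}(\vec{x})$, we have $q < \tau$ iff $\exists \vec{y} \in [0,1]^m$ with $\vec{y} = \vec{f}(\vec{y})$ and $y_1 < \tau$; the predicates $\le$ and $=$ are handled analogously, while $>, \ge, \ne$ are obtained by complementation. Each resulting sentence has constantly many quantifier alternations and polynomial size, so Canny's decision procedure for the (existential, resp.\ $\forall\exists$) theory of the reals yields the claimed PSPACE bound. For part~(c), I would directly invoke the polynomial-time approximation algorithm of~\cite{EtessamiSY12:poly-time}, or equivalently the Newton iteration analysis of~\cite{EKL10:SICOMP}, which computes $\vec{y}^*$ within additive error~$2^{-j}$ in time polynomial in~$j$ and the bit size of~$\vec{f}$.

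The main obstacle, in my estimation, is part~(a) for $\tau = 1$: unlike the case $\tau=0$, determining equality to~$1$ resists any purely syntactic iteration and depends on a delicate criticality analysis of the Jacobian at the LFP, which is precisely the content of the polynomial time algorithm that I would cite. Parts~(b) and~(c) reduce in a routine way to off-the-shelf tools once the translation from our PPS-expressibility formalism to the standard format of monotone polynomial systems is made explicit.
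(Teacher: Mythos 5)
Your proposal is correct and takes essentially the same route as the paper, which discharges all three parts to the literature: (a) to the qualitative (strongly) polynomial-time algorithms of \cite{EYstacs05Extended,EGK10:stacs}, (b) to the encoding of the least fixed point in the existential theory of the reals (using that every nonnegative solution dominates the LFP, together with closure of PSPACE under Boolean combinations), and (c) to Corollary~4.5 of \cite{EtessamiSY12:poly-time}. Two minor slips worth correcting: the $q=1$ test examines the Jacobian of $\vec{f}$ evaluated at $\vec{1}$ (on suitable strongly connected components of the dependency graph), not at the LFP, which is in general irrational; and the polynomial-time approximation bound in (c) is due specifically to \cite{EtessamiSY12:poly-time} --- the Newton analysis of \cite{EKL10:SICOMP} by itself does not give a bound polynomial in the binary representation size of $\vec{f}$.
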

Part~(a) follows from~\cite{EYstacs05Extended,EGK10:stacs}.
Part~(b) is shown in~\cite[section~4]{EYstacs05Extended} by appealing to
 the existential fragment of the first-order theory of the reals, which is decidable in~PSPACE, see~\cite{Can88,Renegar92}.
Part~(c) follows from a recent result~\cite[Corollary~4.5]{EtessamiSY12:poly-time}.
The following proposition follows from a classical result on branching processes~\cite{Harris63}.
\begin{proposition} \label{prop:bp-termination}
 Let $\Delta=(\Gamma, \mathord{\btran{}}, \mathit{Prob})$ be a branching process.
 Let $X \in \Gamma$ and $\Lambda \subseteq \Gamma$.
 Then $\Pr[t_X \models \mathsf{AF} \Lambda]$ is PPS-expressible.
\end{proposition}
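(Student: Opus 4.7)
The plan is to encode the probabilities $q_Y := \Pr[t_Y \models \ms{AF}\Lambda]$ for all $Y\in\Gamma$ simultaneously as the least nonnegative solution of a polynomial system, with the variables ordered so that $X$ comes first. For each $Y\in\Gamma$ I introduce a variable $x_Y$ and set
\[
 f_Y(\vec x) \;=\; \begin{cases} 1 & \text{if } Y\in\Lambda,\\ \sum_{Y\btran{p}\alpha_1\cdots\alpha_k} p\cdot x_{\alpha_1}\cdots x_{\alpha_k} & \text{if } Y\notin\Lambda. \end{cases}
\]
The coefficients are nonnegative rationals read off from $\mathit{Prob}$, the system is clearly computable in polynomial time from $\Delta$, and $f_Y(\vec 1)=1$ for every $Y$ (using $\sum_{Y\btran{p}\alpha} p = 1$), so the syntactic conditions from the definition of PPS-expressibility hold. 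It remains to identify $q_X$ with the first component of the least nonnegative solution.

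I claim $\vec q = (q_Y)_{Y\in\Gamma}$ is the least nonnegative fixed point of $\vec f$. That it \emph{is} a fixed point is a one-line conditioning argument: for $Y\notin\Lambda$, the subtrees of $t_Y$ rooted at the children of the root are, conditional on the first rule $Y\btran{p}\alpha_1\cdots\alpha_k$ being applied, independently distributed as $t_{\alpha_1},\ldots,t_{\alpha_k}$, and $t_Y\models\ms{AF}\Lambda$ holds iff every one of them satisfies $\ms{AF}\Lambda$; the case $Y\in\Lambda$ is immediate since the root already witnesses $\ms{AF}\Lambda$. For minimality I use Kleene iteration. Define $q_Y^{(n)}$ as the probability that $t_Y$ has a finite prefix of depth at most $n$ along every branch of which a $\Lambda$-labelled node appears; each such event is a finite disjoint union of cylinders (ranging over all candidate prefixes), hence measurable, and a routine induction on $n$ yields $\vec q^{(n)} = \vec f^n(\vec 0)$. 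Since $\{t_Y\models\ms{AF}\Lambda\}$ is the monotone union over $n$ of these events, continuity of probability gives $q_Y^{(n)}\uparrow q_Y$. Monotonicity of $\vec f$ on $[0,\infty)^\Gamma$ then forces any nonnegative fixed point $\vec y$ to satisfy $\vec y = \vec f^n(\vec y) \ge \vec f^n(\vec 0) = \vec q^{(n)}$ for every $n$, and hence $\vec y \ge \vec q$.

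The main obstacle is the minimality step: one must set up the approximants $q^{(n)}$ so that they simultaneously carry the right probabilistic meaning (guaranteeing convergence to~$q$) and arise as the Kleene iterates $\vec f^n(\vec 0)$. Both facts are proved by the same induction on~$n$ once the finite-prefix formulation of $q^{(n)}$ is fixed, and the argument is essentially the classical extinction-probability analysis of Harris~\cite{Harris63}, with ``the population dies out'' replaced by ``every branch has met~$\Lambda$''.
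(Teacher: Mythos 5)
Your proof is correct and takes essentially the same route as the paper, which offers no explicit proof but states that the proposition ``follows from a classical result on branching processes''~\cite{Harris63}: your fixed-point system plus the Kleene-iteration/monotone-convergence argument is exactly that classical extinction-probability analysis adapted to $\mathsf{AF}\,\Lambda$, as you note yourself. Two harmless nits: the correct indexing is $\vec q^{(n)} = \vec f^{\,n+1}(\vec 0)$ rather than $\vec f^{\,n}(\vec 0)$ (the limit is unaffected), and the claim that $\{t_Y\models\mathsf{AF}\,\Lambda\}$ is the monotone union of the depth-$n$ events silently uses K\H{o}nig's lemma.
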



\section{Parity Specifications} \label{sec:parity}

In this section we show how to compute the probability of those trees 
 that satisfy a given parity specification.

A \emph{(top-down) deterministic (amorphous) parity tree automaton} (DPTA) is a tuple
$\A=(Q,\Gamma,q_0,\delta,c)$, where $Q$ is the finite set of states, $q_0\in Q$ is the initial state,
 $\delta:Q \times \Gamma \times \mathbb{N} \to Q^*$ is the transition function satisfying $|\delta(q,X,n)| = n$ for all $q,X,n$,
 and $c:Q\to\mathbb{N}$ is a colouring function.
Such an automaton~$\A$ maps a tree $t=(V,\ell)$ over~$\Gamma$ to the (unique) tree $\A(t) = (V,\ell')$ over~$Q$ such that
 $\ell(\varepsilon) = q_0$ and for all $w \in V$, $\sigma_{A(t)}(w) = \delta(\ell'(w),\ell(w),\beta_V(w))$.

Automaton~$\A = (Q,\Gamma,q_0,\delta,c)$ \emph{accepts} a tree~$t$ over~$\Gamma$ if
 for all branch labels $q_0 q_1 \cdots \in Q^\omega$ of~$\A(t)$ the highest colour that occurs infinitely often in $c(q_0), c(q_1), \ldots$ is even.
\begin{example}
Recall (e.g., from~\cite{Thomas97handbook}) that any $\omega$-regular \emph{word} property (e.g., any LTL specification)
 can be translated into a deterministic parity \emph{word} automaton.
Such an automaton, in turn, can be easily translated into a DPTA
 which specifies that the labels of \emph{all} branches satisfy the $\omega$-regular word property.
We do not spell out the translation, but let us note that in the resulting tree automaton,
 for all $(q,X) \in Q \times \Gamma$ there is $q' \in Q$ such that $\delta(q,X,k) = (q',\dots,q')$ for all~$k$.
\end{example}

Given a colouring function $c: \Gamma \to \N$, a tree~$(V,\ell)$ over~$\Gamma$ is called \emph{good} for~$c$
 if for each branch $u_0, u_1, \cdots$ the largest number that occurs infinitely often in the sequence
 $c(\ell(u_0)), c(\ell(u_1)), \ldots$ is even.
The following proposition is immediate.
\begin{proposition} \label{prop:product}
 Let $\Delta=(\Gamma, \mathord{\btran{}}, \mathit{Prob})$ be a branching process,
  and let $\A=(Q,\Gamma,q_0,\delta,c)$ be a DPTA.
 Define the \emph{product} of $\Delta$ and~$\A$ as the branching process
  $\Delta_\bullet = (\Gamma \times Q, \mathord{\btran{}_\bullet}, \mathit{Prob}_\bullet)$
   with $(X,q) \btran{p}_\bullet (Y_1,q_1)\dots(Y_k,q_k))$ for $X
   \btran{p} Y_1\dots Y_k$, where $(q_1,\dots,q_k) = \delta(q,X,k)$.
 Define $c_\bullet : \Gamma \times Q \to \N$ by $c_\bullet(X,q) := c(q)$.
 Then for all $X \in \Gamma$ we have
 \[
  \Pr_X^\Delta[t \text{ is accepted by~$\A$}] = \Pr_{(X,q_0)}^{\Delta_\bullet}[t \text{ is good for~$c_\bullet$}]\,.
 \]
\end{proposition}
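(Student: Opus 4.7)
The plan is to set up an explicit probability-preserving bijection
\[
 \Phi : \trees{\Delta}_X \to \trees{\Delta_\bullet}_{(X,q_0)}
\]
that pairs each tree with its unique $\A$-run, and then to observe that this bijection turns acceptance by~$\A$ into goodness for~$c_\bullet$.

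First I would define $\Phi$ on finite prefixes. Given $t=(V,\ell)\in\ftrees{\Delta}_X$ with automaton output $\A(t)=(V,\ell')$, set $\Phi(t) := (V,\ell_\bullet)$ where $\ell_\bullet(w)=(\ell(w),\ell'(w))$. An induction on the depth of $w$ shows $\Phi(t)\in\ftrees{\Delta_\bullet}_{(X,q_0)}$: the root label is $(X,q_0)$ by construction, and a local rule $\ell(w)\btran{p}\sigma_t(w)$ matches, by the definitions of $\btran{}_\bullet$ and of $\A(t)$, a unique rule of $\Delta_\bullet$ out of $(\ell(w),\ell'(w))$ with the same probability. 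Conversely, projection on the first coordinate sends any tree in $\ftrees{\Delta_\bullet}_{(X,q_0)}$ back to a tree in $\ftrees{\Delta}_X$, and determinism of~$\delta$ forces the second coordinate to coincide with the automaton run; so $\Phi$ is a bijection on finite trees with $p_\Delta(t)=p_{\Delta_\bullet}(\Phi(t))$. Extending $\Phi$ pointwise to infinite trees and using that cylinders generate the $\sigma$-algebras $\Sigma_X$ and $\Sigma_{(X,q_0)}$ yields, by Carath\'eodory, a measure-preserving bijection between the two probability spaces.

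Finally, I would transport the parity condition across~$\Phi$. For any branch $u_0,u_1,\ldots$ of~$t$, the labels $\ell'(u_0),\ell'(u_1),\ldots$ are the corresponding branch label of $\A(t)$, i.e.\ a run $q_0,q_1,\ldots\in Q^\omega$, while the branch of $\Phi(t)$ carries the labels $(\ell(u_i),q_i)$. Since $c_\bullet(Y,q)=c(q)$, the colour sequence along this branch of $\Phi(t)$ coincides with $c(q_0),c(q_1),\ldots$, which is exactly the sequence whose largest infinitely-occurring value defines acceptance of~$t$ by~$\A$. Hence $t$ is accepted by $\A$ iff $\Phi(t)$ is good for~$c_\bullet$, and combining this equivalence with measure preservation gives the claimed equality of probabilities.

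The only non-bureaucratic point is the measure-theoretic one of checking that the acceptance and goodness events are measurable and are mapped to one another by $\Phi$; both are countable Boolean combinations of cylinders determined by the colour sequences along branches, and $\Phi$ commutes with the cylinder structure, so this is routine. Everything else is pure bookkeeping, which is presumably why the authors classify the proposition as immediate.
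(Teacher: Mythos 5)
Your proof is correct and is precisely the argument the authors have in mind when they declare the proposition ``immediate'' (the paper gives no written proof): the product is designed so that annotating each tree with its unique deterministic run is a cylinder-preserving bijection under which acceptance becomes goodness for~$c_\bullet$. Nothing further to add.
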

In view of Proposition~\ref{prop:product}, it suffices to compute the probability $\Pr[t_X \text{ is good for~$c$}]$,
 where a branching process $\Delta=(\Gamma, \mathord{\btran{}}, \mathit{Prob})$ with $X \in \Gamma$ and a colouring function $c: \Gamma \to \N$
  are fixed for the rest of the section. 
We write $\Pr[t_X \text{ is good}]$ if $c$ is understood.
We distinguish between the \emph{qualitative problem},
 i.e., computing whether $\Pr[t_X \text{ is good}] = 1$ holds for a given $X \in \Gamma$,
 and the \emph{quantitative problem}, i.e., the computation of $\Pr[t_X \text{ is good}]$.

\subsection{The Qualitative Problem} \label{sub:parity-qual}

The outline of this subsection is the following:
We will show that the qualitative problem can be solved in polynomial time (Theorem~\ref{thm:parity-qual}).
First we show (Proposition~\ref{prop:qualitative-ltl}) that it suffices to compute all \emph{clean} types, where ``clean'' is defined below.
We will show (Lemma~\ref{lem:cut-or-dirty}) that a type~$X$ is clean if and only if $\Pr[t_X \models \mathsf{AF} \Lambda] = 1$
 holds for suitable set $\Lambda \subseteq \Gamma$.
By Proposition~\ref{prop:bp-termination} the latter condition can be checked in polynomial time,
 completing the qualitative problem.

If there exists a tree $(V,\ell) \in \trees{\Delta}_X$ and a node~$u\in V$ with $\ell(u) = Y$,
 then we say that $Y$ is \emph{reachable} from~$X$.
Given $X \in \Gamma$ and a finite word $w=X_0 \cdots X_m \in \Gamma^+$,
 we say that $w$ is \emph{$X$-closing} if $m \ge 1$ and $X_m = X$ and $c(X_i) \le c(X)$ for $0 \le i \le m$.
A branch with label $X_0 X_1 \cdots \in \Gamma^\omega$ is called \emph{$X$-branch}
 if $X_0 = X$ and there is a sequence $0 = m_0 < m_1 < m_2 < \cdots$ such that
 $X_{m_i} \cdots X_{m_{i+1}}$ is $X$-closing for all $i \in \N$.
We say that a type $Y \in \Gamma$ is \emph{odd} (resp.~\emph{even}), if $c(Y)$ is odd (resp.~even).
Observe that a tree~$t$ is good if and only if for all its vertices~$u$ and all \emph{odd} types~$Y$ the subtree $t_u$ does not have a $Y$-branch.
A type $Y\in \Gamma$ is \emph{clean} if $Y$ is even or $\Pr[t_Y\mbox{ has a $Y$-branch}]=0$.
The following proposition reduces the qualitative problem
 to the computation of all clean types.
\begin{proposition} \label{prop:qualitative-ltl}
We have that $\Pr[t_X \mbox{ is good}]=1$ if and only if all $Y$ reachable from $X$ are clean.
\end{proposition}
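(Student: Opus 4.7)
The plan is to exploit the characterisation, already noted in the text, that a tree $t$ is good iff for every vertex $u$ and every odd type $Y$ the subtree $t_u$ has no $Y$-branch. Combined with the (strong) Markov property of branching processes --- conditional on a vertex $u$ having label $Y$, the subtree $t_u$ is distributed as $t_Y$ --- this reduces the question to controlling, for each odd type $Y$ that can be reached, the probability that some subtree rooted at a $Y$-labelled vertex has a $Y$-branch.

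For the ``if'' direction, I assume every type reachable from $X$ is clean and show $\Pr[t_X\text{ is good}]=1$. Let $U$ be the (countable) set of words in $\mathbb{N}^*$. For each $u\in U$ and each odd $Y$ reachable from $X$, by the Markov property
\[
\Pr[\,u\in V,\ \ell(u)=Y,\ t_u\text{ has a }Y\text{-branch}\,]
\;=\;\Pr[\,u\in V,\ \ell(u)=Y\,]\cdot\Pr[t_Y\text{ has a }Y\text{-branch}]\,,
\]
and the second factor vanishes because $Y$ is odd and clean. Taking a countable union over $u\in U$ and over the finitely many odd reachable $Y$ shows that almost surely no subtree has an odd $Y$-branch, which by the stated characterisation means $t_X$ is almost surely good. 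Note that types not reachable from $X$ contribute nothing, since they appear in $t_X$ with probability~$0$.

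For the ``only if'' direction, I argue the contrapositive: suppose some odd $Y$ reachable from $X$ satisfies $\Pr[t_Y\text{ has a }Y\text{-branch}]=:q>0$. By reachability there exists a finite tree $t'\in\ftrees{\Delta}_X$ containing a node $u_0$ with label $Y$; since all rule probabilities are positive, $p_\Delta(t')>0$, and so $\Pr[\Cyl_\Delta(t')]>0$. Conditional on $\Cyl_\Delta(t')$, the subtree at $u_0$ is distributed according to $\Pr_Y$ and is independent of the rest of the tree, so the probability that $t_X$ lies in $\Cyl_\Delta(t')$ and the subtree at $u_0$ has a $Y$-branch is $p_\Delta(t')\cdot q>0$. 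Any such tree is not good, so $\Pr[t_X\text{ is good}]<1$.

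\textbf{Expected obstacle.} The genuinely subtle ingredient is the ``if'' direction, where one needs to justify that the Markov property applies simultaneously to \emph{all} vertices $u$ of the random tree; this is handled cleanly by reducing to a countable sum over \emph{positions} $u\in\mathbb{N}^*$ rather than over the (random) node set $V$, and by noting that the set of odd reachable types is finite. The ``only if'' direction is essentially a cylinder computation using positivity of rule probabilities and independence of disjoint subtrees.
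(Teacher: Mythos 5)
Your proof is correct and follows essentially the same route as the paper's: the ``if'' direction uses cleanness of all reachable types together with a countable union over vertices, and the ``only if'' direction uses a positive-probability cylinder containing a node of an unclean type $Y$. (The paper's version is merely terser; the one small detail to tighten in yours is to choose the finite tree $t'$ so that $u_0$ is a \emph{leaf} of $t'$, which guarantees that, conditional on $\Cyl_\Delta(t')$, the subtree at $u_0$ is distributed exactly as $\Pr_Y$.)
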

\begin{proof}
 If there is an unclean reachable~$Y$, then $\Pr[t_Y \mbox{ is good}] < 1$ and so $\Pr[t_X \mbox{ is good}] < 1$.
 Otherwise, for each node~$v$ in~$t_X$ and for each odd $Y$ we have that $\Pr[(t_X)_v \text{ has a $Y$-branch}] = 0$.
 Since the set of nodes in a tree is countable, it follows that almost surely no subtree of~$t_X$ has a $Y$-branch for odd~$Y$;
  i.e., $t_X$ is almost surely good.
\qed
\end{proof}
Call a path in a tree \emph{$X$-closing} if the corresponding label sequence is $X$-closing.
Given $X\in \Gamma$, we define
 \[N_X := \{Y\in \Gamma \mid \mbox{ no tree in~$\trees{\Delta}_Y$ has an $X$-closing path}\}\,.\]
Note that $c(Y) > c(X)$ implies $Y \in N_X$ and that $N_X$ is computable in polynomial time.
A word $X_0 X_1 \cdots \in (\Gamma^* \cup \Gamma^\omega)$ is called \emph{$X$-failing} if no prefix is $X$-closing and there is $i \ge 0$ with $X_i \in N_X$.
A branch in a tree is called $X$-failing if the corresponding branch label is $X$-failing.
\newcommand{\Clos}{\mathsf{Clos}}%
\newcommand{\Fail}{\mathsf{Fail}}%
Given $X \in \Gamma$ and a tree~$t$, let $\Clos_X(t)$ (resp.~$\Fail_X(t)$) denote the set of those nodes~$w$ in~$t$
 such that the path to~$w$ is $X$-closing (resp.~$X$-failing) and no proper prefix of this path is $X$-closing (resp.~$X$-failing).
We will need the following lemma.
\newcommand{\stmtlemcutorinf}{
 Define the events
  $C := \{t_X \mid t_X \models \mathsf{AF} \left( \Clos_X(t_X) \cup \Fail_X(t_X) \right) \}$
  and $I := \{t_X \mid \Clos_X(t_X) \text{ is infinite}\}$.
 Then $C \cap I = \emptyset$ and $\Pr[C \cup I] = 1$.
}
\begin{lemma} \label{lem:cut-or-inf}
\stmtlemcutorinf
\end{lemma}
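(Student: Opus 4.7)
My plan is to prove the two claims separately.

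For the disjointness $C \cap I = \emptyset$, I would exhibit, whenever $\Clos_X(t_X)$ is infinite, an infinite branch of $t_X$ that avoids $\Clos_X(t_X) \cup \Fail_X(t_X)$, so that $t_X \notin C$. Let $A$ be the set of strict ancestors in $t_X$ of nodes in $\Clos_X(t_X)$. Then $A$ is prefix-closed in $t_X$ and, since the tree is finitely branching (with degree at most $K_\Delta$) while $\Clos_X(t_X)$ must therefore contain nodes of arbitrarily large depth, $A$ is itself infinite. K\"onig's lemma then yields an infinite branch $u_0, u_1, \ldots$ in $A$. By the antichain property of $\Clos_X(t_X)$ no $u_i$ lies in $\Clos_X(t_X)$; and since each $u_i$ is a strict ancestor of some $w \in \Clos_X(t_X)$ whose path is first $X$-closing, no prefix of the path to $u_i$ is $X$-closing. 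Moreover, the restriction of the path from $u_i$ down to $w$ witnesses an $X$-closing path in the subtree rooted at $u_i$, so $\ell(u_i) \notin N_X$; the same argument applies to every ancestor of $u_i$ on the branch, hence no label along the path to $u_i$ lies in $N_X$, the path is not $X$-failing, and $u_i \notin \Fail_X(t_X)$.

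For $\Pr[C \cup I] = 1$, equivalently $\Pr[\bar C \cap \bar I] = 0$, I would work with the random subtree $S(t_X)$ consisting of all nodes $u$ such that the path from the root to $u$ has neither an $X$-closing nor an $X$-failing prefix. A routine case analysis shows that the boundary of $S(t_X)$ (nodes not in $S$ whose parents are in $S$) is exactly $\Clos_X(t_X) \cup \Fail_X(t_X)$, so by K\"onig once more, $t_X \in C$ iff $|S(t_X)| < \infty$, and $\bar C \cap \bar I$ is precisely the event $\{|S(t_X)| = \infty\} \cap \{|\Clos_X(t_X)| < \infty\}$. Now $S(t_X)$ is the sample tree of a sub-branching process $\hat \Delta$ on the type set $\Gamma' := \{Y \in \Gamma : Y \ne X \text{ and } Y \notin N_X\}$ (with $X$ at the root): each rule $Y \btran{p} \alpha$ of $\Delta$ is applied with probability $p$, but only the $\Gamma'$-letters of $\alpha$ continue the $S$-subtree, while $X$-letters spawn children in $\Clos_X(t_X)$ and $N_X$-letters spawn children in $\Fail_X(t_X)$. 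Hence $|S(t_X)| = \infty$ is precisely the survival event of $\hat \Delta$.

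To conclude I would invoke standard multi-type branching-process theory: on the survival event, the $\hat \Delta$-population is almost surely eventually confined to a supercritical terminal strongly connected component $T$ of the type graph of $\hat \Delta$. The crucial observation is that $T$ must contain some type $Y^*$ for which $\Delta$ has a rule $Y^* \btran{p} \alpha$ with $X \in \alpha$; otherwise every $\Delta$-tree generated from a $T$-type would use only types in $T \cup N_X$ and never visit $X$, contradicting $T \subseteq \Gamma'$. By ergodicity of the supercritical process within $T$, the type $Y^*$ occurs at infinitely many nodes on the survival event, and since rule choices at distinct nodes are conditionally independent given the node types, the Borel--Cantelli lemma forces infinitely many $Y^*$-nodes to spawn $X$-children, so $|\Clos_X(t_X)| = \infty$ almost surely on survival. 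The main obstacle is precisely this last step: converting survival of $\hat \Delta$ into $|\Clos_X(t_X)| = \infty$ requires the terminal-SCC analysis combined with ergodicity of supercritical multi-type branching processes and Borel--Cantelli.
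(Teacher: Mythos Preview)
Your argument for $C \cap I = \emptyset$ is correct and coincides with the paper's: both apply K\"onig's lemma to the (infinite, finitely branching) set of strict ancestors of $\Clos_X(t_X)$ to extract a branch that never meets $\Clos_X(t_X) \cup \Fail_X(t_X)$.

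For $\Pr[C \cup I] = 1$ your route differs from the paper's and contains a gap. The assertion that on survival ``the $\hat\Delta$-population is almost surely eventually confined to a supercritical terminal strongly connected component~$T$'' is false in general: if the root has children in two distinct supercritical terminal SCCs, both lineages may survive simultaneously, and the population is never confined to a single one. What does hold, and suffices, is that on survival $S(t_X)$ has an infinite branch whose type sequence must eventually enter and remain in some terminal SCC~$T$; the $S$-subtree below that entry point is then an infinite tree for the irreducible process $\hat\Delta|_T$, so $T$ is supercritical and every type of~$T$---in particular your~$Y^*$---occurs infinitely often there. (Your justification for the existence of~$Y^*$ is also slightly off: $\Delta$-trees from $T$-types need not stay in $T \cup N_X$, since $N_X$-types may lead anywhere; the correct argument is that any $X$-closing path from a $T$-type must remain in~$T$ up to its last step, which is impossible if no $T$-type has a rule producing~$X$, whence $T \subseteq N_X$, a contradiction.) With these repairs your argument works.

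The paper's approach is more elementary and avoids both the SCC decomposition and the appeal to supercritical multi-type branching-process theory. It observes directly that every node of $S(t_X)$ has a type outside~$N_X$ and hence admits, in some $\Delta$-tree, an $X$-closing path of length at most~$|\Gamma|$; thus there is a uniform $p>0$ such that unfolding $|\Gamma|$ levels below any node of $S(t_X)$ reveals a new $\Clos_X$-node with probability at least~$p$. Packaging this as an iterative procedure whose non-termination is exactly the event~$\bar C$, a direct estimate (the probability of seeing fewer than $n$ $\Clos_X$-nodes after $i$ iterations decays geometrically in~$i$) yields $|\Clos_X(t_X)| = \infty$ almost surely on non-termination.
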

The following lemma states in particular that an odd type~$X$ is clean if and only if $\Pr[t_X \models \mathsf{AF} N_X] = 1$.
We prove something slightly stronger:
\newcommand{\lemcutordirty}{
Define the events
 $F := \{t_X \mid t_X \models \mathsf{AF}N_X\}$ and
 $H := \{t_X \mid t_X \text{ has an $X$-branch}\}$.
Then $F \cap H = \emptyset$ and $\Pr[F \cup H] = 1$.
}
\begin{lemma} \label{lem:cut-or-dirty}
\lemcutordirty
\end{lemma}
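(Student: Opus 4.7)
The plan is to treat the two assertions separately.

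For $F \cap H = \emptyset$, I argue directly from the definitions. Suppose $t_X$ carries an $X$-branch with labels $X_0 X_1 \cdots$ and return indices $0 = m_0 < m_1 < \cdots$, so that $c(X_i) \le c(X)$ for every $i$ and $X_{m_j} = X$ for every $j$. If some $X_j$ lay in $N_X$, then for the least $m_i > j$ the path from $u_j$ down to $u_{m_i}$ in $(t_X)_{u_j}$ would have label sequence $X_j, \ldots, X_{m_i} = X$, which is $X$-closing, contradicting the defining property of $N_X$. Hence no label on the branch lies in $N_X$, and this branch witnesses $t_X \notin F$.

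For $\Pr[F \cup H] = 1$, I reduce the claim to a branching-process dichotomy via a \emph{skeleton} construction. The skeleton of $t_X$ has root $\varepsilon$, and the children of a skeleton node $v$ are the elements of $\Clos_X((t_X)_v)$; by the Markov property of $\Delta$ it is itself a branching process whose offspring distribution $\mu$ on $\N_0 \cup \{\infty\}$ is the law of $|\Clos_X(t_X)|$. Two correspondences are central: (i) $H$ coincides with ``skeleton has an infinite branch'', since concatenating successive $X$-closing segments along such a branch produces an $X$-branch in $t_X$ and conversely; and (ii) $F$ coincides with ``skeleton is finite'', which uses $F \subseteq C$ (a branch of $t_X \in F$ either passes through a $\Clos_X$-node before reaching its first $N_X$-labelled node, or else the path to that node is itself $X$-failing), together with Lemma~\ref{lem:cut-or-inf} applied to each skeleton subtree and an induction on skeleton height.

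Granted these correspondences, $\Pr[\neg F \cap \neg H] = 0$ becomes the statement that the skeleton is almost surely not ``infinite without an infinite branch''. Let $\iota := \mu(\infty) = \Pr[|\Clos_X(t_X)| = \infty]$ and $\gamma := \Pr[H]$. Any realisation of $\neg F \cap \neg H$ needs some skeleton node with infinite offspring. The key quantitative observation is: conditional on such a node, the infinitely many i.i.d.\ subtrees each contain an infinite branch with probability $\gamma$, so the probability that none does is $(1-\gamma)^\infty$, which vanishes when $\gamma > 0$. Thus, if $\gamma > 0$, every infinite-offspring node almost surely has an infinite-branch descendant, ruling out the pathological event.

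It remains to show that $\gamma = 0$ forces $\iota = 0$. Setting $\gamma_k := \Pr[\text{skeleton has depth} \ge k]$, which decreases to $\gamma$ from $\gamma_0 = 1$, the same conditioning shows that whenever $\gamma_k > 0$ the root having infinite offspring already contributes $\iota$ to $\gamma_{k+1}$, giving $\gamma_{k+1} \ge \iota$ for all $k \ge 0$ and hence $\gamma \ge \iota$ in the limit. When $\iota = 0$ every skeleton node has finite offspring almost surely, K\"onig's lemma makes ``skeleton infinite'' synonymous with ``skeleton has an infinite branch'', and $F \cup H$ has full measure. I expect the main obstacle to be the correct handling of the infinite-branching regime $\iota > 0$, where the classical branching-process dichotomy has to be recovered from the above survival-propagation argument; once that is in place the rest is elementary.
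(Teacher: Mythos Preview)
Your strategy is sound and takes a genuinely different route from the paper. Both arguments pass through Lemma~\ref{lem:cut-or-inf} and a skeleton whose nodes are the successive $\Clos_X$-frontiers, but the paper organises the proof around an explicit unfolding procedure and, in the case $\Pr[I]>0$, \emph{truncates} to a bounded-degree ``$(X,k)$-skeleton'' so that it can quote the classical finite-offspring Galton--Watson criterion (mean offspring $>1$ implies positive survival). You instead keep the full, possibly infinitely-branching skeleton, establish the deterministic correspondences (i) and~(ii), and recover the dichotomy by a direct survival-propagation argument. Your version is more structural and avoids the procedural bookkeeping; the paper's version sidesteps the infinite-branching subtleties altogether by truncation.

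One step deserves more care. You assert that $\gamma_k:=\Pr[\text{skeleton depth}\ge k]$ decreases to $\gamma=\Pr[H]$. For locally finite trees this is K\H{o}nig's lemma, but in your setting the skeleton may have nodes of infinite degree, and then $\lim_k\gamma_k=\Pr[\text{infinite depth}]$, which is a priori only $\ge\gamma$: a tree can have unbounded depth via an infinite family of ever-longer finite sub-branches without containing any infinite branch. So the passage from $\gamma_k\ge\iota$ to $\gamma\ge\iota$ is not justified as written. The conclusion $\gamma=0\Rightarrow\iota=0$ is nevertheless correct and can be obtained directly, bypassing the $\gamma_k$: if $\iota>0$, condition on the root having infinite offspring, pick the first child that also has infinite offspring (it exists almost surely since the children are i.i.d.\ and each has infinite offspring with probability~$\iota>0$), and iterate; this manufactures an infinite skeleton branch with probability at least~$\iota$, hence $\gamma\ge\iota>0$. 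With this small repair your argument goes through.
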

Now we have:
\begin{theorem} \label{thm:parity-qual}
  One can decide in polynomial time whether $\Pr[t_X \mbox{ is good}]=1$.
\end{theorem}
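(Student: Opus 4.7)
The plan is to assemble the polynomial-time algorithm directly from the reductions already supplied in this subsection. By Proposition~\ref{prop:qualitative-ltl} the question $\Pr[t_X \mbox{ is good}] = 1$ reduces to checking, for every type $Y$ reachable from $X$, whether $Y$ is clean. The set of types reachable from $X$ in $\Delta$ can be computed in polynomial time by standard graph reachability on the ``support'' graph whose edges are $X \to Y$ whenever $Y$ appears in some $\alpha$ with $X \btran{} \alpha$. So the task is reduced to testing cleanness of a single type $Y$ in polynomial time.

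If $Y$ is even, it is clean by definition and nothing has to be checked. If $Y$ is odd, cleanness amounts to $\Pr[t_Y \text{ has a } Y\text{-branch}] = 0$, i.e.\ to $\Pr[H] = 0$ in the notation of Lemma~\ref{lem:cut-or-dirty}. Since Lemma~\ref{lem:cut-or-dirty} asserts that the events $F = \{t_Y \models \mathsf{AF} N_Y\}$ and $H$ are disjoint and together have probability~$1$, this is in turn equivalent to $\Pr[t_Y \models \mathsf{AF} N_Y] = 1$. The set $N_Y$ is computable in polynomial time, as was observed right after its definition (one computes, by simple reachability on $\btran{}$, the types that cannot produce a path whose labels stay within $c^{-1}(\{0,\ldots,c(Y)\})$ and end in $Y$).

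Now Proposition~\ref{prop:bp-termination} tells us that the quantity $\Pr[t_Y \models \mathsf{AF} N_Y]$ is PPS-expressible, with its defining polynomial system constructible in polynomial time from $\Delta$ and $N_Y$. Proposition~\ref{prop:PPS-expressible}(a) then gives a (strongly) polynomial-time decision procedure for the question ``does this PPS-expressible quantity equal $1$?''. So the cleanness test for a single odd type is polynomial, and iterating over all (at most $|\Gamma|$) reachable types stays polynomial.

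The only place where anything genuinely delicate happens is the equivalence ``odd $Y$ clean $\iff \Pr[t_Y \models \mathsf{AF} N_Y] = 1$'', and that is precisely the content of Lemma~\ref{lem:cut-or-dirty}, which I am invoking as a black box here; what remains for this theorem is only the bookkeeping described above. Hence I expect the proof to be a short paragraph that chains Proposition~\ref{prop:qualitative-ltl}, Lemma~\ref{lem:cut-or-dirty}, Proposition~\ref{prop:bp-termination}, and Proposition~\ref{prop:PPS-expressible}(a), noting in passing the polynomial-time computability of reachability and of $N_Y$.
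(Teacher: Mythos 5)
Your proposal is correct and follows exactly the paper's own argument: reduce to cleanness via Proposition~\ref{prop:qualitative-ltl}, characterise cleanness of an odd type as $\Pr[t_Y \models \mathsf{AF} N_Y] = 1$ via Lemma~\ref{lem:cut-or-dirty}, and decide that condition in polynomial time via Propositions \ref{prop:bp-termination} and \ref{prop:PPS-expressible}(a). The paper's proof is just a terser version of the same chain, leaving the reachability and $N_Y$-computation bookkeeping implicit.
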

\begin{proof}
 By Proposition~\ref{prop:qualitative-ltl} it suffices to show that cleanness can be determined in polynomial time.
 By Lemma~\ref{lem:cut-or-dirty} an odd type~$X$ is clean if and only if $\Pr[t_X \models \mathsf{AF} N_X] = 1$.
 The latter condition is decidable in polynomial time by Proposition~\ref{prop:bp-termination}.
\qed
\end{proof}

\begin{example} \label{ex:running}
Consider the branching process with $\Gamma = \{1,2,3,4\}$ and the rules
$1 \btran{1/3} 1 1$, $1 \btran{2/3} 4$, $2 \btran{1/2} 1 3$, $2 \btran{1/2} 2 3$, $3 \btran{2/3} 3 3$, $3 \btran{1/3} 1$, $4 \btran{1} 4$,
 and the colouring function~$c$ with $c(i) = i$ for $i \in \{1,2,3,4\}$.
Using a simple reachability analysis one can compute the sets
$N_1 = \{2,3,4\}$, $N_2 = \{1,3,4\}$, $N_3 = \{1,4\}$, $N_4 = \emptyset$.
Applying Proposition~\ref{prop:bp-termination} we find $\Pr[t_3 \models \mathsf{AF} N_3] < 1 = \Pr[t_1 \models \mathsf{AF} N_1]$.
It follows by Lemma~\ref{lem:cut-or-dirty} that the only \emph{un}clean type is~$3$.
Since type~$3$ is only reachable from~$2$ and from~$3$,
 Proposition~\ref{prop:qualitative-ltl} implies that $\Pr[t_X \text{ is good}] = 1$ holds if and only if $X \in \{1,4\}$.
\end{example}

\subsection{The Quantitative Problem} \label{sub:parity-quant}

Define
 $G := \{X \in \Gamma \mid \text{ all $Y$ reachable from~$X$ are clean}\}
 $.
The following Proposition~\ref{prop:AFG} states that $\Pr[t_X \mbox{ is good}] = \Pr[t_X \models \mathsf{AF} G]$.
This implies, by Proposition~\ref{prop:bp-termination}, that the probability is PPS-expressible
 (see Theorem~\ref{thm:parity-quant}).


\newcommand{\stmtpropAFG}{
We have $\Pr[t_X \mbox{ is good}] = \Pr[t_X \models \mathsf{AF} G]$.
}
\begin{proposition} \label{prop:AFG}
\stmtpropAFG
\end{proposition}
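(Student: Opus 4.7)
The plan is to prove the two inequalities separately. For $\Pr[t_X \text{ is good}] \ge \Pr[t_X \models \mathsf{AF} G]$, I show that $\mathsf{AF} G$ implies almost-sure goodness. On the event $\mathsf{AF} G$, K{\"o}nig's lemma (applicable because branching processes generate finitely branching trees) yields a finite antichain $W$ of $G$-labelled nodes through which every branch of $t_X$ passes. Conditional on the prefix down to $W$, the subtrees $(t_X)_w$ for $w\in W$ are mutually independent, each distributed as $t_{\ell(w)}$ with $\ell(w)\in G$, and hence almost surely good by Proposition~\ref{prop:qualitative-ltl}. Since goodness depends only on the $\limsup$ of colours along each branch, the finite prefix above $W$ is irrelevant, so $t_X$ is almost surely good on $\mathsf{AF} G$, giving the $\ge$ inequality.

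For the $\le$ direction, the key observation is that both $p_X := \Pr[t_X \text{ good}]$ and $q_X := \Pr[t_X \models \mathsf{AF} G]$ are nonnegative fixed points of the same polynomial system $x_X = \sum_{X \btran{p} Y_1\cdots Y_k} p \prod_i x_{Y_i}$ (with $x_X = 1$ for $X\in G$): for $p$, because goodness of $t_X$ is equivalent to goodness of each immediate child-subtree (the root colour is absorbed into a finite prefix and is irrelevant to any $\limsup$); for $q$, by the Markov structure of the generated tree, and $q$ is in fact the \emph{least} nonnegative fixed point (Proposition~\ref{prop:bp-termination}). I would then decompose $\bar G := \Gamma \setminus G$ into strongly connected components under reachability and argue from bottom SCCs upward. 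Fix a bottom SCC $C \subseteq \bar G$; since $G$ is closed under reachability (any type reachable from a $G$-type is clean, hence in $G$), any reachability path between $\bar G$-types stays in $\bar G$, so all $\bar G$-descendants of $C$-types lie in $C$ and the restricted sub-branching process on $C$ is irreducible. Proposition~\ref{prop:qualitative-ltl} gives $p_X<1$ for $X\in C$, and $q_X<1$ as well: each $X\in C$ reaches an unclean $Z\in C$; by Lemma~\ref{lem:cut-or-dirty} $t_Z$ has a $Z$-branch with positive probability; this $Z$-branch is almost surely an infinite $\bar G$-branch, for if it visited a $G$-type node $u$ the subtree rooted at $u$ would be almost surely good (Proposition~\ref{prop:qualitative-ltl}) while simultaneously containing the bad tail of the $Z$-branch, which is impossible. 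The $C$-subsystem is probability-preserving (its rule probabilities sum to one, independent of which $G$-values are substituted) and has irreducible mean matrix with extinction vector $\vec q|_C < \vec 1$, so the classical multitype branching-process uniqueness result implies that $\vec 1$ and $\vec q|_C$ are the only fixed points in $[0,1]^{|C|}$; since $\vec p|_C \ne \vec 1$ we conclude $\vec p|_C = \vec q|_C$. Back-substituting these values into the polynomial systems of the higher SCCs and iterating yields $p = q$ throughout.

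The main obstacle will be invoking the classical fixed-point uniqueness result cleanly: one must verify the hypotheses (irreducibility of the $C$-mean matrix, and that the $C$-subsystem remains probability-preserving after the $G$-substitution), and the auxiliary claim $q_X<1$ for $X\in C$ relies on the closure-of-$G$ argument together with Lemma~\ref{lem:cut-or-dirty}. An alternative bypassing the algebraic uniqueness is a Borel-Cantelli argument: conditional on an infinite $\bar G$-branch entering $C$, irreducibility of $C$ ensures that at infinitely many nodes along the branch the off-branch subtrees have independent, uniformly positive probability of containing a $Z$-branch, so almost surely some off-branch subtree is bad, yielding $\Pr[t_X \text{ good and } \neg \mathsf{AF} G]=0$ directly.
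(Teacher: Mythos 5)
Your first inequality ($\mathsf{AF}G$ implies almost-sure goodness) is correct and essentially the argument the paper uses to show the two events are disjoint up to a null set (the paper sums over the countably many nodes rather than extracting a finite antichain, but the content is the same). For the converse you take a genuinely different, algebraic route, where the paper instead runs a staged unfolding procedure and invokes Lemma~\ref{lem:cut-or-dirty} at unclean nodes. Your route has a concrete gap: the classical uniqueness theorem for multitype branching processes needs \emph{non-singularity} (not every type produces exactly one offspring a.s.) in addition to irreducibility, and that hypothesis can fail here. Take $X \btran{1} X D$ and $D \btran{1} D$ with $c(X)$ odd and $c(D)$ even: then $G=\{D\}$, the bottom SCC is $C=\{X\}$, and the $G$-substituted system is $x_X = x_X$, whose fixed-point set is all of $[0,1]$. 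Knowing $p_X<1$ then does not force $p_X=q_X$; one must argue directly that the unique $(\Gamma\setminus G)$-branch has odd $\limsup$ (which requires relating the colours met along that branch to the unclean type, i.e., a Lemma~\ref{lem:cut-or-dirty}-style argument). You flag irreducibility and stochasticity as the hypotheses to check, but not non-singularity, which is the one that actually breaks. A smaller issue: for non-bottom SCCs the back-substituted system satisfies $\vec g(\vec 1)\lneq\vec 1$, so it is no longer a branching-process generating function and the classical theorem does not apply there either; you would need the uniqueness result for strongly connected PPSs (which exists, and in fact makes that case easier since $\vec 1$ is then not a fixed point).

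The Borel--Cantelli fallback does not repair this, for two reasons. First, an infinite branch avoiding $G$ need not reach a bottom SCC; it may stay forever in a non-bottom SCC of $\Gamma\setminus G$, so the argument must be run in whatever SCC the branch is eventually absorbed into. Second, and more importantly, the off-branch subtrees along such a branch need not have uniformly positive probability of being bad: in the example above every off-branch child has type $D\in G$, so every off-branch subtree is almost surely good, and the event $\{t_X \text{ good}\}\cap\{t_X\not\models\mathsf{AF}G\}$ must be killed by the branch itself rather than by independent off-branch witnesses. Replacing ``independent off-branch subtrees'' by ``the whole subtree $t_v$ at nodes $v$ along the branch'' restores the uniform lower bound (every $\Gamma\setminus G$ type reaches an unclean $Z$ and then a $Z$-branch with probability bounded below) but destroys independence, since these subtrees are nested; making that work requires a fresh-start/stopping-time argument, which is exactly what the paper's unfolding procedure provides.
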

This implies the following theorem.
\begin{theorem} \label{thm:parity-quant}
 For any $X \in \Gamma$ we have that $\Pr[t_X \text{ is good}]$ is PPS-expressible.
\end{theorem}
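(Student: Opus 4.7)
The plan is to assemble the theorem from three ingredients already at hand: Proposition~\ref{prop:AFG}, Proposition~\ref{prop:bp-termination}, and the polynomial-time computation of the set~$G$ underlying Theorem~\ref{thm:parity-qual}.

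First I would show that $G$ itself is computable in polynomial time. For each odd type $Y\in\Gamma$ the set $N_Y$ is computable in polynomial time by a standard reachability analysis on the rule graph of $\Delta$, and by Lemma~\ref{lem:cut-or-dirty} combined with Proposition~\ref{prop:bp-termination}(a) the predicate $\Pr[t_Y \models \mathsf{AF} N_Y]=1$ is decidable in polynomial time; even types are clean by definition. Thus the set of clean types is computable in polynomial time. Using another polynomial-time reachability pass on~$\Gamma$, one determines for every~$X$ the set of types reachable from~$X$, and $G$ is obtained as the set of those~$X$ for which all such reachable types are clean.

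Next I would apply Proposition~\ref{prop:AFG} to rewrite
\[
\Pr[t_X \text{ is good}] \;=\; \Pr[t_X \models \mathsf{AF}\, G],
\]
and then invoke Proposition~\ref{prop:bp-termination} with $\Lambda:=G$ to obtain, in polynomial time, a fixed-point system $\vec{x}=\vec{f}(\vec{x})$ whose least nonnegative solution has its distinguished component equal to $\Pr[t_X \models \mathsf{AF}\, G]$. By definition, this exhibits $\Pr[t_X \text{ is good}]$ as PPS-expressible.

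No genuinely hard step remains at this point: the real technical content lives in Proposition~\ref{prop:AFG} (the equation $\Pr[t_X\text{ is good}]=\Pr[t_X\models\mathsf{AF}\,G]$) and in Proposition~\ref{prop:bp-termination}. The only thing I would be careful to spell out is that the construction of~$G$ and the handing-over to Proposition~\ref{prop:bp-termination} together run in time polynomial in the size of $(\Delta, c)$, so that the overall transformation from input to PPS is polynomial, as required by the definition of PPS-expressibility.
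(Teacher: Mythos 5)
Your proposal is correct and follows exactly the paper's own argument: apply Proposition~\ref{prop:AFG} to reduce to $\Pr[t_X \models \mathsf{AF}\,G]$, then invoke Proposition~\ref{prop:bp-termination} with $\Lambda := G$, noting that $G$ is polynomial-time computable via the cleanness test from Theorem~\ref{thm:parity-qual}. The only difference is that you spell out the computation of~$G$ in more detail, which the paper delegates to the proof of Theorem~\ref{thm:parity-qual}.
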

\begin{proof}
 By Proposition~\ref{prop:AFG} we have $\Pr[t_X \mbox{ is good}] = \Pr[t_X \models \mathsf{AF} G]$.
 So we can apply Proposition~\ref{prop:bp-termination} with $\Lambda := G$.
 Note that $G$ can be computed in polynomial time, as argued in the proof of Theorem~\ref{thm:parity-qual}.
\qed
\end{proof}

\begin{example}
 We continue Example~\ref{ex:running}, where we have effectively computed $G = \{1,4\}$,
  and thus established that $\Pr[t_1 \text{ is good}] = \Pr[t_4 \text{ is good}] = 1$.
 By Proposition~\ref{prop:AFG} the probabilities $\Pr[t_2 \text{ is good}]$ and $\Pr[t_3 \text{ is good}]$ are given by
  $\Pr[t_2 \models \ms{AF} G]$ and $\Pr[t_3 \models \ms{AF} G]$.
 Proposition~\ref{prop:bp-termination} assures that these probabilities are PPS-expressible;
  in fact they are given by the least nonnegative solution of the equation system
 $[x_2 = \frac12 x_3 + \frac12 x_2 x_3, \ x_3 = \frac23 x_3^2 + \frac13]$,
 which is $x_2 = \frac13$ and $x_3 = \frac12$.
 Hence, we have $\Pr[t_2 \text{ is good}] = \frac13$ and $\Pr[t_3 \text{ is good}] = \frac12$.
\end{example}

\subsubsection{A Lower Bound.}
\newcommand{\Ahit}{\A_\mathit{hit}}
We close the section with a hardness result in terms of the PosSLP problem,
 which asks whether a given straight-line program or,
 equivalently, arithmetic circuit with operations $\mathord{+}$, $\mathord{-}$, $\mathord{\cdot}$, and inputs 0 and~1, and a designated output gate,
 outputs a positive integer or not.
PosSLP is in PSPACE, but known to be in~NP.
The PosSLP problem is a fundamental problem for numerical computation,
 see \cite{AllenderBKM06} for more details.

\newcommand{\ini}{q}
\newcommand{\fin}{r}
For given~$\Gamma$ with $D \in \Gamma$,
 consider the DPTA $\Ahit = (\{\ini,\fin\},\Gamma,a,\delta,c)$ with
 $c(\ini) = 1$ and $c(\fin) = 2$; \
 $\delta(\ini,X,1) = (\ini)$ and $\delta(\ini,X,2) = (\ini,\ini)$ for $X \in \Gamma \setminus \{D\}$; \
 $\delta(\ini,D,1) = (\fin)$ and $\delta(\ini,D,2) = (\fin,\fin)$; \
 $\delta(\fin,X,1) = (\fin)$ and $\delta(\fin,X,2) = (\fin,\fin)$ for $X \in \Gamma$.
Automaton~$\Ahit$ specifies that all branches satisfy the LTL property~$\ms{F} D$,
 i.e., all branches eventually hit~$D$.
Let QUANT-HIT denote the problem to decide whether
 $\Pr_X^\Delta[t \text{ is accepted by~$\Ahit$}] > p$ holds for
  a given branching process~$\Delta = (\Gamma, \mathord{\btran{}}, \mathit{Prob})$ with $X \in \Gamma$
  and a given rational $p \in (0,1)$.
By Theorem~\ref{thm:parity-quant} and Proposition~\ref{prop:PPS-expressible}, QUANT-HIT is in PSPACE.
We have the following proposition:
\newcommand{\stmtprophardness}{
  QUANT-HIT is PosSLP-hard. 
}
\begin{proposition}[see Theorem~5.3 of~\cite{EYstacs05Extended}] \label{prop-hardness}
\stmtprophardness
\end{proposition}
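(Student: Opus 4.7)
The plan is to exhibit a polynomial-time reduction from PosSLP to QUANT-HIT, closely paralleling the reduction of Theorem~5.3 of~\cite{EYstacs05Extended} from PosSLP to RMC termination. The key observation is that the quantity asked about in QUANT-HIT admits a polynomial fixed-point characterisation of exactly the same shape as RMC termination probabilities, so the EY construction transfers.

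First I would set up the fixed-point system. By Proposition~\ref{prop:product}, the acceptance probability equals $\Pr^\Delta_X[t \models \ms{AF}\{D\}]$. Writing $p_X$ for this quantity and using independence of subtrees, the vector $(p_X)_{X\in\Gamma}$ is the least non-negative solution of
\[
  p_X \;=\; \sum_{X \btran{q} Y_1\cdots Y_k} q \cdot p_{Y_1}\cdots p_{Y_k}, \qquad p_D \;=\; 1,
\]
which is a PPS in the sense of Proposition~\ref{prop:PPS-expressible}. Deciding $p_X > p$ for QUANT-HIT is therefore a strict-inequality query on a coordinate of the least fixed point of such a system, i.e.\ formally the same query that EY study for 1-exit RMCs.

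Second, I would recall how the EY reduction encodes arithmetic circuits in such systems. Two primitive gadgets suffice:
\[
  X \btran{1} YZ \;\Longrightarrow\; p_X = p_Y p_Z, \qquad
  X \btran{a} Y,\; X \btran{1-a} Z \;\Longrightarrow\; p_X = a p_Y + (1-a) p_Z,
\]
together with the constants $p_D=1$ and $p_X=0$ realised by an $X$ with no path to~$D$. To simulate a PosSLP circuit over $\{+,-,\cdot\}$ whose intermediate values can exceed~$1$ or be negative, each gate value $a$ is represented by an ordered pair of types $(X_a^+, X_a^-)$ maintaining the invariant $a = c\cdot(p_{X_a^+}-p_{X_a^-})$ for some fixed scaling $c=2^{-s}$, with $p_{X_a^\pm}\in[0,1]$; addition, subtraction and multiplication then unfold to constant-size gadgets of the above two forms, multiplication using the four-term expansion $(a^+b^+ + a^-b^-) - (a^+b^- + a^-b^+)$ with convex combinations to keep each component in $[0,1]$.

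Finally, given a PosSLP instance of size~$n$ computing $N \in \mathbb{Z}$, this construction produces in polynomial time a branching process and a distinguished type $X_{\mathit{out}}$ such that $p_{X_{\mathit{out}}} - \tfrac12$ has the same sign as~$N$; comparing $p_{X_{\mathit{out}}}$ with the rational threshold $\tfrac12$ then answers PosSLP. The main obstacle, which is the technical content of~\cite[Thm.~5.3]{EYstacs05Extended}, is bookkeeping the scaling factor $2^{-s}$: each arithmetic layer halves the usable range, so one must normalise the circuit to balanced form and bound~$s$ by a polynomial in $n$ so that the resulting threshold has polynomially many bits. This bookkeeping is insensitive to whether the underlying PPS arises from an RMC or from a branching process with objective $\ms{AF}\{D\}$, so the EY argument applies verbatim.
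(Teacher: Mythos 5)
Your proposal is correct and follows essentially the same route as the paper: both identify the least fixed point of $p_X=\sum_{X\btran{q}Y_1\cdots Y_k}q\,p_{Y_1}\cdots p_{Y_k}$ (with $p_D=1$) as the quantity queried by QUANT-HIT, observe that this is exactly the quantitative termination problem for 1-exit RMCs, and invoke Theorem~5.3 of~\cite{EYstacs05Extended} for PosSLP-hardness. The paper simply cites that theorem as giving the result immediately via this exact correspondence, whereas you additionally re-sketch the circuit-encoding gadgets from the EY proof; that extra detail is consistent with, but not required by, the paper's argument.
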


\section{Logic Specifications} \label{sec:logic}

In this section, we propose a logic akin to PCTL, called \emph{probabilistic tree temporal logic},
 to specify the properties of random trees generated from a branching process.
We also present model-checking algorithms for this logic.

\begin{definition}[PTTL]
   \emph{Probabilistic Tree Temporal Logic (PTTL)} formulae over a set~$\Sigma$ of atomic propositions
    are defined by the following grammar:
\begin{align*}
  \phi,\phi' & ::= \top \mid a \mid \neg \phi \mid \phi \wedge \phi' \mid  [\psi]_{\bowtie r}  \\
  \psi & ::= \ms{AX} \phi \mid \ms{EX}\phi \mid \phi\ms{AU}\phi' \mid \phi\ms{EU}\phi' \mid  \phi\ms{AR}\phi' \mid  \phi\ms{ER}\phi'\,,
\end{align*}
where $a\in \Sigma$,  $\mathord{\bowtie} \in\{\mathord{<}, \mathord{\leq}, \mathord{\geq}, \mathord{>}\}$, and $r\in \Q \cap [0,1]$.
If $r \in \{0,1\}$ holds for all subformulae of a PTTL formula~$\phi$, we say that $\phi$ is in the \emph{qualitative fragment} of PTTL.
We use standard abbreviations such as $\bot$ for~$\neg\top$, $\ms{AF}\phi$ for $\top\ms{AU}\phi$, $\ms{EG}\phi$ for $\bot\ms{ER}\phi$, etc.
\end{definition}
For the PTTL semantics we need the notion of a \emph{labelled} branching process,
 which is a branching process $\Delta=(\Gamma, \mathord{\btran{}}, \mathit{Prob})$ extended by a function $\chi: \Gamma \to 2^\Sigma$,
  where $\chi(X)$ indicates which atomic propositions the type~$X$ satisfies.
\begin{definition}[Semantics of PTTL]
\label{def:pttl_sem}
Given a labelled branching process $\Delta=(\Gamma, \mathord{\btran{}}, \mathit{Prob}, \chi)$,
 we inductively define a \emph{satisfaction relation}~$\mathord{\models}$ as follows,
 where $X \in \Gamma$:
\begin{align*}
& X \models \top \\[-1mm]
& X \models a                   && \Leftrightarrow  a\in \chi(X)\\[-1mm]
& X \models \neg \phi           && \Leftrightarrow  X\not\models \phi\\[-1mm]
& X \models \phi \wedge \phi'   && \Leftrightarrow  X \models \phi \text{ and } X \models \phi'\\[-1mm]
& X \models [\psi]_{\bowtie r}  && \Leftrightarrow  \Pr_X^\Delta[t_X \models \psi]\bowtie r\\
& t \models \ms{AX}\phi         && \Leftrightarrow  \text{for all branches } u_0u_1\cdots \text{ of~$t$ we have } \ell(u_1)\models \phi\\[-1mm]
& t \models \phi\ms{AU}\phi'    && \Leftrightarrow  \text{for all branches } u_0u_1\cdots \text{ of~$t$ there exists } i\in \mb{N} \text{ with} \\[-1.5mm]
&                               && \qquad \ell(u_i) \models \phi' \text{ and for all $0\leq j<i$ we have } \ell(u_j) \models \phi\\[-1mm]
& t \models \phi\ms{AR}\phi'    && \Leftrightarrow  \text{for all branches } u_0u_1\cdots \text{ of~$t$ and for all $i \in \N$ we have} \\[-1.5mm]
&                               && \qquad \ell(u_i) \models \phi' \text{ or there exists $0 \le j < i$ with } \ell(u_j) \models \phi
\end{align*}
The modalities $\ms{EX}$, $\ms{EU}$ and $\ms{ER}$ are defined similarly, with ``for all branches'' replaced by ``there exists a branch''.
\end{definition}
We now present the model checking algorithm.
The algorithm shares its basic structure with the well-known algorithm for (P)CTL and finite (probabilistic) transition systems.
Given a PTTL formula~$\phi$, the algorithm recursively evaluates the truth values of the PTTL subformulae $\psi$ of~$\phi$ for all types.
The boolean operators can be dealt with as in the CTL algorithm.
Hence, it suffices to examine formulae of the form $[\psi]_{\bowtie r}$.
Observe that we have
$\ms{EX}\phi \equiv \neg \ms{AX} \neg \phi$ and
 $\phi\ms{ER}\phi' \equiv  \neg (\neg\phi\ms{AU}\neg\phi')$ and $\phi\ms{EU}\phi' \equiv \neg (\neg\phi\ms{AR}\neg\phi')$ and
\[X\models [\neg \phi]_{\bowtie r}\mbox{ if and only if }X\models [\phi]_{\bar{\bowtie} 1-r}\,,\]
where $\mathord{\bar{\bowtie}} \in \{\mathord{\ge}, \mathord{>}, \mathord{<}, \mathord{\le}\}$ is the complement operator of
      $\mathord{\bowtie}       \in \{\mathord{<}, \mathord{\leq}, \mathord{\geq}, \mathord{>}\}$.
Hence, it suffices to deal with the following three cases:
 (i) $X\models [\ms{AX} \phi]_{\bowtie r}$;
 (ii)  $X\models [\phi \ms{AU} \psi]_{\bowtie r}$;
 (iii) $X\models [\phi \ms{AR} \psi]_{\bowtie r}$.
We assume in the following case distinction that the algorithm has already computed the truth values of the subformulae $\phi, \psi$.
One could now construct a suitable DPTA for each of the cases (i)--(iii), and proceed according to the machinery of Section~\ref{sec:parity}.
Instead we present in the following a more direct and more efficient algorithm
 which takes advantage of the special shape of the linear-time operators $\ms{X}$, $\ms{U}$ and~$\ms{R}$.

\noindent\emph{Case (i):}
We have
$\displaystyle \Pr[t_X \models \ms{AX} \phi]
 = \mathop{\sum_{X \btran{p} Y_1\dots Y_k}}_{Y_1,\dots,Y_k \models \phi} p$,
 which is easy to compute.
So one can decide in polynomial time whether $X \models [\ms{AX}\phi]_{\bowtie r}$.

\noindent\emph{Case (ii):}
We reduce the check of the $\phi \ms{AU} \psi$ modality to a check of~$\ms{AF}$.
To this end, we define a branching process $\Delta' = (\Gamma \times \{0,\frac12,1\}, \mathord{\btran{}'}, \mathit{Prob}')$
 which tracks the ``status'' of $\phi \ms{AU} \psi$.
We define~$\Delta'$ in terms of an auxiliary function
 $f_{\phi,\psi}: \Gamma \to \{0,\frac12,1\}$ with
$f_{\phi,\psi}(Y) = 0$ if $Y \models \neg \phi \ \land \ \neg \psi$,
$f_{\phi,\psi}(Y) = \frac12$ if $Y \models \phi \ \land \ \neg \psi$, and
$f_{\phi,\psi}(Y) = 1$ if $Y \models \psi$.
For any rule $X \btran{p} Y_1\dots Y_k$ in $\Delta$, there are three
corresponding rules in $\Delta'$, namely
 $(X,0) \btran{p} (Y_1,0)\dots(Y_k,0)$,
 $(X,1) \btran{p} (Y_1,1)\dots(Y_k,1)$, and
 $(X,\frac12) \btran{p} (Y_1,f_{\phi,\psi}(Y_1))\dots(Y_k,f_{\phi,\psi}(Y_k))$.
By this construction we achieve $\Pr_X^\Delta[t_X \models \phi \ms{AU} \psi] = \Pr_{X'}^{\Delta'}[t_{X'} \models \ms{AF} \Lambda]$ for $X' = (X,f_{\phi,\psi}(X))$
 and $\Lambda := \Gamma \times \{1\}$.
Hence, using Propositions \ref{prop:PPS-expressible} and~\ref{prop:bp-termination} we obtain
 that whether $X \models [\phi \ms{AU} \psi]_{\bowtie r}$ holds is decidable in PSPACE;
 and in polynomial time for $r \in \{0,1\}$.

\noindent\emph{Case (iii):}
Similarly to case~(ii) we reduce the check of $\phi \ms{AR} \psi$ to a check of~$\ms{AG}$.
This time we define~$\Delta' = (\Gamma \times \{0,\frac12,1\}, \mathord{\btran{}'}, \mathit{Prob}')$ in terms of an auxiliary function
 $g_{\phi,\psi}: \Gamma \to \{0,\frac12,1\}$ with
$g_{\phi,\psi}(Y) = 0$ if $Y \models \neg \psi$,
$g_{\phi,\psi}(Y) = \frac12$ if $Y \models \neg \phi \ \land \ \psi$,
$g_{\phi,\psi}(Y) = 1$ if $Y \models \phi \ \land \ \psi$.
The rules of~$\Delta'$ are defined as in case~(ii), except that $f_{\phi,\psi}$ is replaced with $g_{\phi,\psi}$.
By this construction we achieve $\Pr_X^\Delta[t_X \models \phi \ms{AR} \psi] = \Pr_{X'}^{\Delta'}[t_{X'} \models \ms{AG} \Lambda]$ for $X' = (X,g_{\phi,\psi}(X))$
 and $\Lambda := \Gamma \times \{\frac12,1\}$.
The following lemma allows to express this probability in terms of $\ms{AF}$ instead of~$\ms{AG}$:
\newcommand{\stmtlemAGAF}{
 Let $\Delta=(\Gamma, \mathord{\btran{}}, \mathit{Prob})$ be a branching process.
 Let $\Lambda \subseteq \Gamma$ such that no type in~$\Lambda$ is reachable from any type in $\Gamma \setminus \Lambda$.
 Define $G := \{Y \in \Lambda \mid \text{all types reachable from~$Y$ are in~$\Lambda$}\}$.
 Let $X \in \Gamma$.
 Then $\Pr[t_X \models \ms{AG} \Lambda] = \Pr[t_X \models \ms{AF} G]$.
}
\begin{lemma} \label{lem:AG-AF}
 \stmtlemAGAF
\end{lemma}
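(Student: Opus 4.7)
Plan. I would prove $\Pr[t_X \models \ms{AG}\, \Lambda] = \Pr[t_X \models \ms{AF}\, G]$ by establishing the two inequalities separately.

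For $\Pr[t_X \models \ms{AF}\, G] \le \Pr[t_X \models \ms{AG}\, \Lambda]$, I would show the deterministic event inclusion. Given a tree $t$ with $t \models \ms{AF}\, G$, pick any node $w$; the infinite branch through $w$ meets some $w^*$ with $\ell(w^*) \in G$. If $w$ is a descendant of $w^*$, then $\ell(w)$ is reachable from $\ell(w^*) \in G$, so $\ell(w) \in \Lambda$ by the definition of $G$. If $w$ is an ancestor of $w^*$, then $\ell(w^*) \in \Lambda$ is reachable from $\ell(w)$; the hypothesis that no $\Lambda$-type is reachable from $\Gamma \setminus \Lambda$ forces $\ell(w) \in \Lambda$. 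Either way every node lies in $\Lambda$, giving $t \models \ms{AG}\, \Lambda$.

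For the reverse, set $q_Y := \Pr[t_Y \models \ms{AG}\, \Lambda]$ and $r_Y := \Pr[t_Y \models \ms{AF}\, G]$. By first-step analysis, both vectors agree on the boundary ($=1$ on $G$ and $=0$ on $\Gamma \setminus \Lambda$) and satisfy the identical PPS
\[
 x_Y \;=\; \sum_{Y \btran{p} Y_1 \cdots Y_k} p \prod_{i=1}^{k} \tilde x_{Y_i} \qquad (Y \in \Lambda \setminus G),
\]
with $\tilde x_Z := 1$ for $Z \in G$, $\tilde x_Z := 0$ for $Z \notin \Lambda$, and $\tilde x_Z := x_Z$ otherwise. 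Proposition~\ref{prop:bp-termination} characterises $r$ as the least nonnegative fixed point, whence $r \le q$ and it suffices to prove uniqueness of the fixed point in $[0,1]^{\Lambda \setminus G}$.

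For uniqueness, the structural key is that each $Y \in \Lambda \setminus G$ admits a positive-probability finite derivation reaching some type in $\Gamma \setminus \Lambda$, which forces $q_Y < 1$. I would consider the $h$-transformed branching process $\hat\Delta$ on $\Lambda$ with rule probabilities $\hat p(Y \btran{}\alpha) := p(Y \btran{}\alpha)\prod_i q_{Y_i}/q_Y$ for rules with all children in $\Lambda$; under $\hat\Delta$, the tree $t_Y$ is distributed as $t_Y$ conditional on $\ms{AG}\, \Lambda$. A direct computation shows the mean matrix of the $(\Lambda\setminus G)$-subprocess of $\hat\Delta$ equals $D^{-1} f'(q)|_{\Lambda\setminus G} D$ with $D = \mathrm{diag}(q|_{\Lambda\setminus G})$; combining the exit-witness condition with $q$ being a fixed point, a Perron--Frobenius argument yields that its spectral radius is at most $1$. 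Hence the $(\Lambda\setminus G)$-subprocess of $\hat\Delta$ is (sub)critical and almost surely goes extinct, so every branch almost surely reaches $G$ under $\hat\Delta$; translating back, $\Pr[t_Y \models \ms{AF}\, G \mid t_Y \models \ms{AG}\, \Lambda] = 1$, i.e.\ $q_Y = r_Y$. The main obstacle is the spectral-radius bound, which is the technical heart of the argument and where the hypothesis on $\Lambda$ is essential.
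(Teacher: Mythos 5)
Your first inequality (the pointwise inclusion $\{t \mid t\models\ms{AF}\,G\} \subseteq \{t\mid t\models\ms{AG}\,\Lambda\}$) is correct and coincides with the first half of the paper's proof. For the converse the paper argues purely probabilistically: it unfolds the tree in stages, keeping a frontier of non-$G$ nodes, and uses the fact that from every non-$G$ type a non-$\Lambda$ type is reached within $|\Gamma|$ levels with probability bounded below by some $p>0$; hence almost surely either the frontier empties ($\ms{AF}\,G$) or a non-$\Lambda$ node appears ($\neg\ms{AG}\,\Lambda$). Your route through the fixed-point system is genuinely different, and much of it is sound: both probability vectors do satisfy the same PPS with the stated boundary values, $r$ is its least nonnegative fixed point, $q_Y<1$ on $\Lambda\setminus G$ because an exit to $\Gamma\setminus\Lambda$ has positive probability, and the bound $\rho\bigl(f'(q)\bigr)\le 1$ does follow from $f'(q)(\vec{1}-q)\le f(\vec{1})-f(q)\le \vec{1}-q$ together with the strict positivity of $\vec{1}-q$.

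The gap is the final step, ``$\rho\le 1$, hence the conditioned subprocess almost surely goes extinct.'' For multitype branching processes this implication is false in general: a critical \emph{singular} process (one in which every individual produces exactly one $(\Lambda\setminus G)$-offspring almost surely) has spectral radius $1$ and never goes extinct, and for reducible mean matrices the classical extinction criterion additionally requires an SCC decomposition. So you must rule out that the $h$-transformed subprocess is singular on some reachable part where $q>0$. This is not automatic: one has to show that from every such type a ``deficient'' rule (one losing conditional probability mass to $\Gamma\setminus\Lambda$, or producing a number of $(\Lambda\setminus G)$-offspring different from one) remains reachable along the conditioned chain, whence $q=P^nq\le P^n\vec{1}\to 0$, contradicting $q>0$. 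That auxiliary argument is essentially the paper's reachability argument in disguise, so the analytic detour does not let you avoid it. Two smaller points: the $h$-transform is undefined at types with $q_Y=0$ (harmless, since there $0\le r_Y\le q_Y=0$, but it should be said), and Proposition~\ref{prop:bp-termination} only asserts that $\Pr[t_X\models\ms{AF}\,G]$ is PPS-expressible, not that it is the least fixed point of \emph{your} particular system --- that is true, but it is the classical characterisation from the branching-process literature rather than a consequence of the proposition as stated.
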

To summarize case~(iii): we have reduced $\ms{AR}$ to~$\ms{AG}$ and then $\ms{AG}$ to~$\ms{AF}$.
Hence, using Propositions \ref{prop:PPS-expressible} and~\ref{prop:bp-termination} we obtain
 that whether $X \models [\phi \ms{AR} \psi]_{\bowtie r}$ holds is decidable in PSPACE;
 and in polynomial time for $r \in \{0,1\}$.

As the overall algorithm computes the truth values of the subformulae recursively,
 we have proved the following theorem:
\begin{theorem} \label{thm:logic}
 Model checking branching processes against PTTL is in~{\rm PSPACE}.
 Model checking branching processes against the qualitative fragment of PTTL is in~{\rm P}.
\end{theorem}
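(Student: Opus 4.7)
The plan is to prove the theorem by structural induction on the PTTL formula~$\phi$, maintaining as an invariant that for each subformula~$\psi$ of~$\phi$ we can compute (and tabulate) the set $\{X \in \Gamma \mid X \models \psi\}$ within the claimed resource bound. Since $\phi$ has polynomially many subformulae, the table has polynomial size, so storing all intermediate results fits in polynomial space. The base cases ($\top$, $a$) consult the labelling~$\chi$ directly, and the boolean cases ($\neg$, $\wedge$) are handled pointwise from the tables for the immediate subformulae.

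For the probabilistic case $[\psi]_{\bowtie r}$ I would first normalise~$\psi$ using the equivalences recorded just before the theorem: the duality $[\neg\phi]_{\bowtie r} \equiv [\phi]_{\bar{\bowtie}\,1-r}$ together with $\ms{EX}\phi \equiv \neg\ms{AX}\neg\phi$, $\phi\ms{EU}\phi' \equiv \neg(\neg\phi\ms{AR}\neg\phi')$, and $\phi\ms{ER}\phi' \equiv \neg(\neg\phi\ms{AU}\neg\phi')$ reduce the work to the three cases (i)--(iii) analysed in the text. Because the truth values of $\phi$ and $\phi'$ on each type have already been computed by the induction hypothesis, the reductions (i)--(iii) only need these tables as input, not further recursive model-checking calls. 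For case (i) a single summation over the rules at~$X$ yields $\Pr[t_X\models\ms{AX}\phi]$ exactly as a rational, and the comparison with~$r$ is trivially in~P. For case (ii) I build the augmented process $\Delta'$ in $\{0,\tfrac12,1\}$-typed copies and apply Proposition~\ref{prop:bp-termination} to the event $\ms{AF}(\Gamma\times\{1\})$; for case (iii) I build the analogous $\Delta'$ for $\ms{AG}(\Gamma\times\{\tfrac12,1\})$ and then invoke Lemma~\ref{lem:AG-AF} to replace $\ms{AG}$ by an $\ms{AF}$ over a polynomial-time-computable set~$G$. In all cases the resulting probability is PPS-expressible (Proposition~\ref{prop:bp-termination}), so deciding whether it satisfies $\bowtie r$ lies in PSPACE by Proposition~\ref{prop:PPS-expressible}(b), and in~P when $r\in\{0,1\}$ by Proposition~\ref{prop:PPS-expressible}(a).

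To assemble the global bound I would observe that each invocation of the PPS comparison oracle operates on an equation system of size polynomial in $|\Delta|$ (the auxiliary process $\Delta'$ only triples the state space and preserves the number of rules), and that there are at most $|\phi|\cdot|\Gamma|$ such invocations. Since PSPACE is closed under polynomially many polynomial-time reductions to a PSPACE oracle, the overall procedure stays in PSPACE; for the qualitative fragment, each invocation is in~P and the composition remains in~P.

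The main technical obstacle is really bookkeeping rather than new mathematics: one must check that the reductions in cases (ii) and (iii) are sound (which is exactly the content of the identities $\Pr_X^\Delta[t_X\models\phi\ms{AU}\psi]=\Pr_{X'}^{\Delta'}[t_{X'}\models\ms{AF}\Lambda]$ and its $\ms{AR}/\ms{AG}$ counterpart, justified by the fact that $f_{\phi,\psi}$ and $g_{\phi,\psi}$ faithfully record the ``status'' of the until/release obligation along every branch) and that the reductions compose without a resource blow-up. Once this is in place, the theorem follows directly from Propositions~\ref{prop:PPS-expressible} and~\ref{prop:bp-termination} together with Lemma~\ref{lem:AG-AF}.
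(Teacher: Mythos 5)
Your proposal is correct and follows essentially the same route as the paper: recursive evaluation of subformulae, normalisation via the dualities to the three cases $\ms{AX}$, $\ms{AU}$, $\ms{AR}$, the product constructions with $f_{\phi,\psi}$ and $g_{\phi,\psi}$ reducing to $\ms{AF}$ (via Lemma~\ref{lem:AG-AF} for the $\ms{AR}$ case), and then Propositions~\ref{prop:PPS-expressible} and~\ref{prop:bp-termination} for the PSPACE and P bounds. The only addition over the paper's argument is your explicit remark on composing polynomially many oracle calls, which the paper leaves implicit.
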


\section{Conclusions and Future Work} \label{sec:conclusions}
Branching processes are a basic formalism for modelling probabilistic parallel programs with dynamic process creation.
This paper is the first to consider the verification of branching processes,
We have shown how to model check specifications given in terms of deterministic parity automata,
 a problem that unifies and strictly generalises linear-time model-checking problems for Markov chains and for (nonprobabilistic) nondeterministic transition systems.
We have also provided model-checking algorithms for a new logic, PTTL, suitable for specifying probabilistic properties of random trees.
To obtain these results we have provided reductions to computing the probability of hitting ``good'' states along all branches.

Future research in this area should involve:
\begin{itemize}
\item
 the complexity of the problem where the specification is an LTL formula required to hold on all branches;
\item
 the problem where deterministic parity automata are replaced by other tree specification formalisms,
  such as CTL (or CTL$^*$) formulae;
\item
 extending the model-checking algorithms to accommodate the synchronisation and communication features of probabilistic split-join systems.
\end{itemize}
It seems that at least the latter two problems require additional techniques,
 as the children of a node in the branching process can no longer be treated independently.

\medskip
\noindent \textbf{Acknowledgements.} We thank anonymous reviewers for their valuable feedback.

\bibliographystyle{plain} 
\bibliography{db}

\iftechrep{
\newpage
\appendix

\section{Omitted Proofs}

\subsection{Proof of Lemma~\ref{lem:cut-or-inf}}

\begin{qlemma}{\ref{lem:cut-or-inf}}
\stmtlemcutorinf
\end{qlemma}

\begin{proof}
We first show $C \cap I = \emptyset$.
Let $t \in I$.
Consider the set~$W$ of those nodes~$w$ in~$t$ such that the path to~$w$
 is neither $X$-closing nor $X$-failing and the same holds for all prefixes of this path.
Since the parents of all nodes in the infinite set~$\Clos_X(t)$ are in~$W$, the set~$W$ is infinite as well.
It follows from K\H{o}nig's lemma that $W$ contains an infinite path, hence a branch in~$t$ that does not hit $\Clos_X(t) \cup \Fail_X(t)$.
So $t \not\in C$.
Thus $C \cap I = \emptyset$.

It remains to show $\Pr[C \cup I] = 1$.
For the proof we follow a particular pattern which we will use several times in this paper:
 we describe a ``procedure which unfolds a tree in stages''.
Such a ``procedure'' takes a tree~$t_X \in \trees{\Delta}_X$ randomly generated by~$\Delta$,
 and inspects finite prefixes of~$t_X$ according to the procedure's pseudocode.
In each step, the procedure accumulates ``observations'' on~$t_X$,
 e.g., on whether or not a node with certain properties has been visited.
Denote by~$O_i(t_X)$ the sequence of observations the procedure makes on~$t_X$ in the first $i$ steps.
For each observation sequence~$o$, denote by~$E(o,i)$ the event that the procedure observes~$o$ in the first $i$ steps, i.e.,
 $E(o,i) = \{t \in \trees{\Delta}_X \mid O_i(t_X) = o\}$.
Any such event $E(o,i)$ is \emph{measurable}, as the procedure looks only at finite prefixes of~$t_X$.
It follows that events such as ``the procedure does not terminate''
 and ``the procedure visits at least $n$ nodes in $\Clos_X(t_X)$''
 are measurable as well.
We follow this pattern in the rest of this proof and give some more details at the end.
The other ``procedural'' proofs in this paper can be treated analogously.

Let $r$ denote the root of~$t := t_X$.
Consider the following procedure which unfolds~$t$ in stages:
\begin{itemize}
\item[1.]
 Initialise a set~$S$ with $S := \{ r \}$.
\item[2.]
 Pick\footnote{To resolve the ``nondeterminism'', we can pick, e.g., the lexicographically smallest node.}
 and remove from~$S$ a node~$u$ and unfold $|\Gamma|$ levels of~$t_u$.
 Let $L$ denote the set of the new ``leaves'', i.e., those descendants of~$u$ that have distance~$|\Gamma|$ from~$u$.
\item[3.]
 Remove from~$L$ those nodes~$w$ that have a (proper or improper) ancestor~$v$ with $v \in \Clos_X(t) \cup \Fail_X(t)$.
 Add the remaining nodes in~$L$ to~$S$.
\item[4.]
 If $S$ is empty, then report ``$t \in C$'' and terminate.
 Otherwise goto 2.
\end{itemize}
If the procedure terminates, it correctly reports ``$t \in C$''.
If it does not terminate, then almost surely $t \in I$, because there is $p>0$ such that in each execution of step~2.\
 the probability of reaching at least one new node in~$\Clos_X(t)$ is at least~$p$.
In other words, the probability of nontermination equals the probability of~$I$.
Hence $\Pr[C \cup I] = 1$.

\newcommand{\Non}{\mathit{Non}}
Let us give some more details on why the probability of nontermination in fact equals the probability of~$I$.
As $C \cap I = \emptyset$, the event $I$ implies nontermination.
So it suffices to argue that $\Pr[\Non \cap F] = 0$,
 where $\Non$ denotes nontermination and $F := \{t_X \mid \Clos_X(t_X) \text{ is finite}\}$.
Consider the event $E_{i,n}$ that after $i$ iterations the procedure has not yet terminated and the number of visited $\Clos_X(t_X)$-nodes is less than~$n$.
As $\Non \cap F = \bigcup_{n\in \N} \bigcap_{i\in \N} E_{i,n}$,
 it suffices to argue that for each $n \in \N$ we have $\lim_{i\to\infty} \Pr[E_{i,n}] = 0$.
Fix an arbitrary $n \in \N$.
In each iteration, there is a positive probability that the procedure terminates or hits at least one new node in $\Clos_X(t_X)$.
This probability is bounded below by some $p>0$:
 take $p$ as the minimal probability over all types $Y \in \Gamma \setminus N_X$ such that a tree rooted at~$Y$ has an $X$-closing path of length~$|\Gamma|$.
Hence, in $n$ procedure iterations the probability of termination or hitting at least $n$ new nodes in $\Clos_X(t_X)$ is at least $q := p^n > 0$.
It follows that $\Pr[E_{i+n,n}] \le (1-q) \Pr[E_{i,n}]$.
Hence we have $\lim_{i\to \infty} \Pr[E_{i,n}] = 0$, as desired.
\qed
\end{proof}

\subsection{Proof of Lemma~\ref{lem:cut-or-dirty}}

We show the following lemma from the main body of the paper:
\begin{qlemma}{\ref{lem:cut-or-dirty}}
\lemcutordirty
\end{qlemma}
\begin{proof}
The equality $F \cap H = \emptyset$ is obvious, so it suffices to show that $\Pr[F \cup H] = 1$.
Let $r$ denote the root of~$t := t_X$.
Consider the following procedure which unfolds~$t$ in stages:
\begin{itemize}
\item[1.] 
 Initialise a set~$S$ with $S := \{ r \}$.
\item[2.]
 Pick and remove from~$S$ a node~$u$ and unfold~$t_u$
  until all ``leaves'' of~$t_u$ are in $\Clos_X(t_u) \cup \Fail_X(t_u)$.
  (Note that this step may not terminate.)
\item[3.]
 Add to~$S$ all those ``leaves'' of~$t_u$ that are in~$\Clos_X(t_u)$.
\item[4.]
 If $S$ is empty, then report ``$t \in F$'' and terminate.
 Otherwise goto 2.
\end{itemize}
If the procedure terminates, it correctly reports ``$t \in F$''.
Using the event~$I$ from Lemma~\ref{lem:cut-or-inf} we distinguish between two cases:
\begin{description}
 \item[(a)] \protect{\normalfont $\Pr[I] = 0$}:
  Lemma~\ref{lem:cut-or-inf} implies that step~2.\ of the above procedure terminates almost surely in every iteration.
  If the overall procedure does not terminate, consider the set~$M$ of those nodes that are in~$S$ at some point during the execution of the procedure.
  This set~$M$ is infinite.
  Then it follows from K\H{o}nig's lemma that $t$ has a branch with infinitely many nodes in~$M$.
  In this branch, any two distinct nodes in~$M$ define an $X$-closing path.
  Hence, the branch is an $X$-branch, so $t \in H$.
 \item[(b)] \protect{\normalfont $\Pr[I] > 0$}:
  Let $a := \Pr[I] > 0$ and choose $k\in \N$ such that $k \cdot a > 1$.
  Consider the \emph{$(X,k)$-skeleton} of the branching process, i.e., the branching process with a single type~$X$ and rules
   \[ X \btran{p_i} \underbrace{X \cdots X}_{\text{$i$ times}} \qquad  \text{for } i \in \{0, 1, \ldots, k\}\]
  where, for $i \le k-1$, the probability~$p_i$ is the probability that a random tree $t_X$ satisfies $|\Clos_X(t_X)| = i$ and
   $p_k$ is the probability that $|\Clos_X(t_X)| \ge k$.
  We claim that the tree generated by the $(X,k)$-skeleton is infinite with positive probability.
  We argue by comparing with a ``smaller'' branching process:
  It is a fact in the theory of branching processes that $k \cdot a > 1$ implies that the tree generated by the branching process with the rules
   \[ X \btran{a} \underbrace{X \cdots X}_{\text{$k$ times}} \qquad \text{and} \qquad X \btran{1-a} \varepsilon \]
  is infinite with positive probability.
  Hence the same holds for the $(X,k)$-skeleton.

\newcommand{\wt}{\widetilde{t}}%
  Now assume the procedure above does not terminate, then, almost surely, in one of the executions of step~2.\
   the set $\Clos_X(t_u)$ is infinite (implying that step~2.\ does not terminate).
  Let $\Clos_X(t_u) = \{v_1, v_2, \ldots\}$.
  Each of the $v_i$ can be regarded as the root of a tree generated by the $(X,k)$-skeleton:
   regard the elements of~$\Clos_X(t_{v_i}) = \{w_{i1}, w_{i2}, \ldots\}$
   (or the first~$k$ according to an arbitrary order, if there are more than~$k$) as \emph{direct} children of~$v_i$,
   and the elements of~$\Clos_X(t_{w_{ij}})$ as \emph{direct} children of~$w_{ij}$ etc.
  In this view, it follows from the previous discussion that, for all $i$, the tree~$\wt_{v_i}$ is infinite with positive probability,
   where by $\wt_{v_i}$ we mean the tree obtained from~$t_{v_i}$ by contracting as described above.
  As a result, there is almost surely an~$i$ such that $\wt_{v_i}$ is infinite.
  Since the $(X,k)$-skeleton is finitely branching, it follows that $\wt_{v_i}$ has an infinite branch.
  By the definition of~$\Clos_X$ this branch corresponds to an $X$-branch in $t_{v_i}$.
  As the path from the root of~$t$ to~$v_{i}$ is $X$-closing, it follows $t \in H$.
\end{description}
We conclude that the probability of nontermination equals the probability of~$H$.
Hence $\Pr[F \cup H] = 1$.
\qed
\end{proof}

\subsection{Proof of Proposition~\ref{prop:AFG}}

We prove the following proposition from the main body of the paper:

\begin{qproposition}{\ref{prop:AFG}}
\stmtpropAFG
\end{qproposition}
\begin{proof}
Define the events $A := \{t_X \mid t_X \models \mathsf{AF} G\}$ and $B := \{t_X \mid t_X \text{ is not good}\}$.
We need to show that $\Pr[A] + \Pr[B] = 1$.
First we show that $\Pr[A \cap B] = 0$.
Assume $t_X \in A$, so all branches in~$t_X$ go through a node after which all reachable types are clean.
Since the set of nodes in a tree is countable, it follows that almost surely no subtree of~$t_X$ has a $Y$-branch for odd~$Y$;
 i.e., $t_X$ is almost surely good.
Hence $\Pr[A \cap B] = 0$.

Now it suffices to show that $\Pr[A \cup B] = 1$.
Let $r$ denote the root of~$t := t_X$.
Consider the following procedure which unfolds~$t$ in stages:
\begin{itemize}
\item[1.] 
 Initialise a set~$S$ with $S := \{ r \}$.
\item[2.]
 \begin{itemize}
  \item[(a)]
   If $S$ contains a node~$u$ of unclean type, say, $Y$, then remove $u$ from~$S$.
   Unfold~$t_u$ until all ``leaves'' of~$t_u$ have $N_Y$-type.
   (Note that this step may not terminate.)
  \item[(b)]
   Otherwise pick and remove from~$S$ a node~$u$ of clean type.
   Unfold~$t_u$ until either all ``leaves'' of~$t_u$ have a $G$-type or at least one ``leaf'' of~$t_u$ has an unclean type.
 \end{itemize}
\item[3.]
  Add all ``leaves'' of~$t_u$ to~$S$ except those that have $G$-type.
\item[4.]
 If $S$ is empty, then report ``$t \in A$'' and terminate.
 Otherwise goto 2.
\end{itemize}
If the procedure terminates, it correctly reports ``$t \in A$''.
If an execution of step~2.~(a) does not terminate, then, by Lemma~\ref{lem:cut-or-dirty}, the tree~$t_u$ almost surely has a $Y$-branch,
 implying that $t \in B$.
In each execution, step~2.~(b) terminates almost surely, as there is $p>0$ such that any node with non-$G$-type reaches an unclean type with probability at least~$p$.
The outer loop (``otherwise goto 2.'') is almost surely executed only finitely often:
 this is because whenever step~2.~(a) is executed, there is a positive probability of nontermination in step~2.~(a);
 and whenever step~2.~(b) is executed, there is a positive probability that this is the last execution of step~2.~(b),
  as there is a positive probability of reaching a ``leaf'' of unclean type,
  which, again with positive probability, results in nontermination during the following execution of step~2.~(a).

We conclude that the probability that the above procedure does not terminate equals the probability of~$B$.
Hence $\Pr[A \cup B] = 1$.
\qed
\end{proof}

\subsection{Proof of Proposition~\ref{prop-hardness}}

We prove the following proposition from the main body of the paper:

\begin{qproposition}{\ref{prop-hardness}}
\stmtprophardness
\end{qproposition}
\begin{proof}
The proof is immediate from Theorem~5.3 of~\cite{EYstacs05Extended},
 as the ``quantitative termination'' problem studied there for so-called 1-exit-RMCs
  corresponds exactly to the $\Ahit$-specification for branching processes.
However, we remark that if the automaton~$\A$ is part of the input,
 the problem to decide whether $\Pr_X^\Delta[t \text{ is accepted by~$\A$}] > p$ holds
 cannot be translated to an RMC problem, at least not in a straightforward way.
\qed
\end{proof}

\subsection{Proof of Lemma~\ref{lem:AG-AF}}

\begin{qlemma}{\ref{lem:AG-AF}}
 \stmtlemAGAF
\end{qlemma}
\begin{proof}
Define the events $A := \{t_X \mid t_X \models \mathsf{AF} G\}$ and $B := \{t_X \mid t_X \not\models \ms{AG} \Lambda\}$.
We need to show that $\Pr[A] + \Pr[B] = 1$.
First we show that $A \cap B = \emptyset$.
Observe that $B$ is the event that $t_X$ has a non-$\Lambda$ node, say~$v$.
By the assumptions of the lemma, neither an ancestor nor a descendant of~$v$ nor $v$ itself can have a $G$-label.
So $t_X \not\in A$.
Hence $\Pr[A \cap B] = 0$.

Now it suffices to show that $\Pr[A \cup B] = 1$.
Let $r$ denote the root of~$t := t_X$.
Consider the following procedure which unfolds~$t$ in stages:
\begin{itemize}
\item[1.] If $X \in G$, report ``$t_X \in A$'' and terminate.
 Otherwise, initialise a set~$S$ with $S := \{ r \}$.
\item[2.]
 Pick and remove from~$S$ a node~$u$ and unfold $|\Gamma|$ levels of~$t_u$.
 Let $L$ denote the set of the new ``leaves'', i.e., those descendants of~$u$ that have distance~$|\Gamma|$ from~$u$.
\item[3.]
 If there is a node~$w$ in~$L$ with $\ell(w) \not\in \Lambda$,
  then report ``$t \in B$'' and terminate.
 Otherwise, remove from~$L$ those nodes~$w$ with $\ell(w) \in G$ and add the remaining nodes in~$L$ to~$S$.
\item[4.]
 If $S$ is empty, then report ``$t \in A$'' and terminate.
 Otherwise goto 2.
\end{itemize}
Clearly, if the procedure terminates, the reported result is correct.
It remains to show that the procedure terminates almost surely.
Observe that all nodes that are in~$S$ at some point have a non-$G$ label.
So in each execution of step~2.\ there is a nonzero probability of hitting a non-$\Lambda$ node, which forces termination in step~3.
\qed
\end{proof}

}
{}
\end{document}